\documentclass[11pt]{article}
\usepackage{graphicx} 
\usepackage{geometry}[1inch]
\usepackage[utf8]{inputenc}
\usepackage{mathrsfs}
\usepackage{amssymb,latexsym,amsmath}
\usepackage{amsthm}
\usepackage{color}
\usepackage{xspace} 
\usepackage{tcolorbox}
\usepackage{hyperref}
\usepackage{thmtools}
\usepackage{cleveref}
\usepackage{wrapfig}
\usepackage{thm-restate}
\usepackage{subcaption}
\usepackage{algorithmic}

\usepackage{algorithm}
\usepackage{float}
\usepackage{enumitem}
\usepackage{booktabs}

\newtheorem{theorem}{Theorem}
\newtheorem{lemma}{Lemma}[section]

\theoremstyle{definition}

\crefname{lemma}{Lemma}{Lemmas}
\crefname{appendix}{Appendix}{Appendices}
\crefname{proposition}{Proposition}{Propositions}
\crefname{observation}{Observation}{Observations}
\crefname{claim}{Claim}{Claims}
\crefname{figure}{Figure}{Figures}

\DeclareMathOperator{\bc}{bc}
\DeclareMathOperator{\tin}{in}
\DeclareMathOperator{\tout}{out}
\newcommand{\SPV}{\text{SP}_V}
\newcommand{\SPE}{\text{SP}_E}

\newcommand{\OPT}{\operatorname{OPT}}
\newcommand{\ALG}{\operatorname{ALG}}

\definecolor{DeepGreen}{RGB}{0, 150, 70}

\newcommand{\Zx}[1]{\{0,1,\dots,#1-1\}}

\title{Optimal Non-Adaptive Vantage Point Selection}
\author{Jie Gao\thanks{\texttt{jg1555@cs.rutgers.edu}. Gao would like to
acknowledge NSF support through IIS-2229876, DMS-2220271, CNS-2515159, DMS-2311064, and CCF-2118953.}\\Rutgers University \and Nicole Wein\thanks{\texttt{nswein@umich.edu}. Supported by NSF CAREER award 2541910. }\\University of Michigan \and Chang Wu\thanks{\texttt{wu-c24@mails.tsinghua.edu.cn}.}\\Tsinghua University}
\date{}

\begin{document}

\maketitle
\begin{abstract}

We study the \emph{vantage point selection} problem, introduced by Ashvinkumar, Chowdhury, Gao, Goswami, Mitchell, and Polishchuk [WADS'25] to model the problem of estimating bottleneck capacities on the Internet.  The input is a weighted undirected graph with unique shortest paths where every edge has a distinct unknown \emph{capacity}. When the algorithm \emph{queries} a vertex $v$, it reveals the minimum-capacity edge on the shortest path from $v$ to every other vertex reachable from $v$. The goal is to maximize the total number of revealed edges. The quality of an algorithm is measured by its competitive ratio against an optimal algorithm that knows all edge capacities a priori. 

We first consider the foundational single-query setting, where both the algorithm and the optimal algorithm are restricted to a single query. There is a trivial upper bound of $O(n)$ on the competitive ratio and the best known lower bound was $\tilde{\Omega}(\sqrt{n})$. We provide an algorithm and matching lower bound (up to polylogarithmic factors) showing that the best possible competitive ratio is $\tilde{\Theta}(n^{2/3})$. 

Furthermore, we extend our results to the general setting where the optimal algorithm is allowed $k$ queries and our algorithm is allowed $\alpha k$ queries for $\alpha\geq 1$. We present a randomized non-adaptive algorithm and matching lower bound (up to polylogarithmic factors) showing that the best possible expected competitive ratio for non-adaptive algorithms is the following surprisingly complex bound:
$$
\tilde{\Theta}\left(
\min\left\{
\frac{n}{\alpha k},
\max\left(
\sqrt{\frac{n}{\alpha}},
\frac{n^{2/3}}{\alpha k^{1/3}}
\right)
\right\}
\right).
$$
\end{abstract}

\pagenumbering{gobble}
\clearpage

\pagebreak
\pagenumbering{arabic}

\section{Introduction}

We study the \emph{vantage point selection} problem, introduced by Ashvinkumar, Chowdhury, Gao, Goswami, Mitchell, and Polishchuk~\cite{Ashvinkumar2025-bh}. The input is a weighted undirected graph where the shortest path between each pair of vertices (if it exists) is unique. Additionally, every edge has a distinct \emph{capacity} that is unknown to the algorithm. The algorithm can perform \emph{queries} on vertices, where \texttt{query($v$)} reveals the minimum-capacity edge on the shortest path from $v$ to every other vertex reachable from $v$. The goal of the algorithm is to maximize the total number of revealed edges. The quality of an algorithm is measured by its \emph{competitive ratio} against an optimal algorithm that knows all edge capacities a priori. We consider the general version of the problem where the optimal algorithm is allowed $k$ queries, and our algorithm is allowed $\alpha k$ queries for some $\alpha\geq 1$. That is, for $\alpha=1$ both algorithms get the same number of queries, and for $\alpha>1$, our algorithm is permitted some \emph{resource augmentation} in the number of queries. We study the \emph{non-adaptive} setting, where our algorithm cannot use the information gained from previous queries to inform future queries. 

\paragraph{Motivation} Inferring network topology and parameters from end-to-end measurements has been an active research topic in network tomography and Internet measurement~\cite{Vardi1996-cy, 10.1214/088342304000000422,Coates2002-xq}. 
In a network, the \emph{capacity} of an edge is the maximum bit rate that can be delivered along it. The maximum data rate along a path is determined by the \emph{minimum} edge capacity, also known as the \emph{bottleneck} edge. Estimating edge capacities and bottleneck edges is important for many applications, especially those that are real-time, throughput-sensitive (e.g., video conferencing, multiplayer gaming, and VR/AR applications)~\cite{Allman1999-ex, Guerrero2010-bx, Banerjee2000-cn, Harfoush2003-mw}. Existing methods rely on the central idea of measuring round-trip time through probing messages~\cite{Salcedo2018-kx,Huang2020-uy,Prasad2003-sv}. 
For example, a \textsc{TraceRoute} probing message 
issued from a source $s$ to a destination $t$ returns, for \emph{each} intermediate node along the shortest path, the round-trip time (RTT). Therefore, the RTT at a vertex right after a bottleneck edge shows a big jump compared to the previous vertex. This can be used to identify the bottleneck edge on the path and estimate its capacity. However, issuing such probing messages from a vertex/node in the network
 is not easy and requires gaining access to the system at that node, which has non-trivial logistical costs~\cite{Jueckstock2019-ym}. Thus, multiple prior works have discussed how to choose vantage points -- the vertices where probing messages are issued~\cite{Bottger2017-oc,Burger2014-hb,Holterbach2017-ec}.

This scenario motivated the authors in
\cite{Ashvinkumar2025-bh}  to define the vantage point selection problem. In particular, querying a vertex $v$ corresponds to issuing a probing message from $v$ to all other vertices to identify the corresponding bottleneck edges. In this view, the vantage point selection problem captures the goal of issuing probing messages from only a limited number of vertices to identify the maximum number of bottleneck edges. The non-adaptive version is particularly relevant, as the algorithm must depend solely on the structure of the graph, and is oblivious to edge capacities and throughput, which can be dynamic due to real-time traffic. 

In addition to the practical motivation, this problem is theoretically interesting in its own right. We believe that our techniques contribute to a broader understanding of how to extract structure from systems of shortest paths that exhibit varying levels of congestion, expansion, and connectivity.

\paragraph{Prior Work}$ $\\
\textbf{Single-query setting.} 
For simplicity, we begin with the setting where both our algorithm and the optimal algorithm get a single query ($k=\alpha=1$). We use $\OPT$ and $\ALG$ to denote the number of edges revealed by the optimal algorithm and our algorithm, respectively. We use the convention of expressing the competitive ratio as a value $\geq 1$, so we define the competitive ratio as $\OPT/\ALG$. 

As a baseline, achieving competitive ratio $n-1$ for an $n$-vertex graph is trivial for the following reason. Any query can only reveal at most $n-1$ edges (one edge per shortest path from the queried vertex to the other $n-1$ vertices), and the algorithm can always guarantee revealing at least one edge if the graph is not empty (querying a vertex $v$ reveals each of $v$'s incident edges that constitute the shortest path to one of $v$'s neighbors). For deterministic algorithms, there is an asymptotically matching lower bound of $\Omega(n)$, where the rough intuition is that hard instances become easier to construct when one can place the minimum-capacity edges incident to the algorithm's deterministic queries~\cite{Ashvinkumar2025-bh}. Thus, the interesting regime of the problem is for randomized algorithms. Before the current paper, no non-trivial randomized algorithm was known. 

However, algorithms were known for some special classes of graphs: competitive ratio $O(\sqrt{n})$ was known for trees, and $O(n^{2/3})$ for planar graphs~\cite{Ashvinkumar2025-bh}. 

From the lower bounds side, prior work~\cite{Ashvinkumar2025-bh} provided a simple lower bound of $\tilde{\Omega}(\sqrt{n})$, that works even for trees ($\tilde{\Omega}$ hides polylogarithmic factors). The construction is simply $\sqrt{n}$ disjoint paths of length $\sqrt{n}$ each. The capacities on one randomly chosen path $P^*$ are decreasing along the path, while the capacities on the edges of the rest of the paths are chosen randomly. (Note that this is sufficient for specifying the capacities since only their relative values matter, not their absolute values.) The optimal algorithm knows the capacities and can thus choose to query the first vertex $v$ of $P^*$. This reveals all of the edges on $P^*$ since the minimum-capacity edge of the shortest path from $v$ to any vertex on $P^*$ is the last edge on that path. Thus, $\OPT=\sqrt{n}$. On the other hand, since the algorithm does not know the capacities, it views all of the paths as indistinguishable. So, the idea is that the algorithm cannot do better than picking a path at random and querying its first vertex (without loss of generality up to constant factors). With probability $1/\sqrt{n}$ the algorithm chooses $P^*$ and reveals $\sqrt{n}$ edges. Otherwise, the algorithm chooses a path with randomly chosen capacities. The number of revealed edges is the number of \emph{record-breaking} capacities along that path (i.e., capacities that are smaller than all previous capacities along the path), which is $O(\log n)$ in expectation. Thus, the algorithm reveals $O(\log n)$ edges in expectation, yielding a competitive ratio of $\tilde{\Omega}(\sqrt{n})$. 

In summary, for the single-query setting, there is a large gap between the known bounds, both of which are very simple to prove, leaving the question:

\begin{center}
\emph{Question: Can we get any non-trivial (sublinear competitive ratio) single-query algorithm for vantage point selection? How close to the lower bound of $\tilde{\Omega}(\sqrt{n})$ can we get?}
\end{center}

\noindent \textbf{General multi-query setting.} Now we consider the general setting where the optimal algorithm gets $k$ queries and our algorithm gets $\alpha k$ queries for $\alpha\geq 1$. When $\alpha=n/k$, the algorithm gets $n$ queries, so the competitive ratio is trivially 1. Beyond such trivial types of analysis, no upper bounds were known for general graphs. 

However, it was known that the single-query algorithms for trees and planar graphs extend to the general multi-query setting.  In particular, for trees the competitive ratio is $O\big(\frac{\sqrt{n}}{\alpha\sqrt{k}}\big)$, and for planar graphs it is $O\big(\frac{n^{2/3}}{\alpha k^{2/3}}\big)$~\cite{Ashvinkumar2025-bh}. 
Additionally, an extension of the single-query lower bound provides a general lower bound of $\tilde{\Omega}\big(\frac{\sqrt{n}}{\alpha\sqrt{k}}\big)$, which works even for trees and matches the upper bound for trees up to polylogarithmic factors~\cite{Ashvinkumar2025-bh}.

In summary, for the general multi-query setting there is a large gap between the known bounds, leaving the question:

\begin{center}
\emph{Question: Can we get any non-trivial algorithm for vantage point selection? How close to the lower bound of $\tilde{\Omega}\big(\frac{\sqrt{n}}{\alpha\sqrt{k}}\big)$ can we get?}
\end{center}


\noindent \textbf{Related settings.}  Prior work~\cite{Ashvinkumar2025-bh} studies several related settings. They consider the case where shortest paths can be non-unique, and ties can be broken differently for different shortest-path trees. In this case, they show a prohibitive competitive ratio lower bound of $\Omega(n^{1-\varepsilon})$ for any constant $\varepsilon$. 

They also consider the stochastic setting in which all capacities are chosen randomly, and the goal is to maximize the expected number of revealed edges. They show that this setting admits much better bounds than the adversarial setting, and obtain an algorithm with a constant competitive ratio for any $k$ with $\alpha=1$. 

Lastly, they consider the computational complexity of the problem of finding the optimal $k$ queries when the capacities are known. They provide a simple proof that the problem is NP-hard by reduction from vertex cover. 

\subsection{Our Results}


We answer both questions by obtaining the first algorithm with non-trivial competitive ratio for both the single-query and multi-query settings with and without resource augmentation. We also provide lower bounds for non-adaptive algorithms that are \emph{tight} up to polylogarithmic factors for the entire range of parameters. 

Although our result in the multi-query setting is a strict generalization of our single-query result, we first present our single-query result for the sake of simplicity:

\begin{restatable}{theorem} {single-main}
\label{thm:single-main}
Given a weighted $n$-vertex $m$-edge graph with unique shortest paths between reachable pairs, the best possible expected competitive ratio for the single-query ($k=\alpha=1$) vantage point selection problem is
$$\frac{\OPT}{\mathbb{E}(\ALG)}=\tilde{\Theta}(n^{2/3}).$$
There is a randomized algorithm achieving this competitive ratio that runs in time $\tilde{O}(n(n+m))$.
\end{restatable}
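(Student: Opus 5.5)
The proof has two halves: an algorithm achieving ratio $\tilde{O}(n^{2/3})$ and a lower-bound instance forcing $\tilde{\Omega}(n^{2/3})$ for every randomized algorithm.

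\textbf{Upper bound.} Fix the optimal query $v^*$ with $\OPT$ revealed edges. Everything is normalized to $\OPT \ge n^{2/3}$, since otherwise revealing a single edge (always possible) already gives ratio below $n^{2/3}$.

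The structural input is as follows. The edges revealed by $v^*$ are bottleneck edges $e_1,\dots,e_{\OPT}$ of its shortest-path tree $T_{v^*}$. For each $e_i$, the set $S_i$ of vertices $u$ whose $v^*$-$u$ path has $e_i$ as bottleneck is nonempty, and the $S_i$ are disjoint. Hence most $S_i$ have size at most about $n/\OPT$, and the corresponding $e_i$ lie in small subtrees of $T_{v^*}$.

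The algorithm mixes two strategies, each chosen with constant probability.
\begin{itemize}
\item \emph{(a) Uniform query.} Query a uniformly random vertex. With probability about $|S_i|/n$ it lands in $S_i$. From there, the shortest path back toward $v^*$ through $e_i$ should, for many such landings, reveal $e_i$ or another edge of comparable value.
\item \emph{(b) Structure-weighted query.} Compute all shortest-path trees in $\tilde{O}(n(n+m))$ time. Sample the query vertex with probability proportional to a structural weight, such as how many shortest paths (or long chains) it lies on, with weights bucketed by powers of two. This handles the case where $\OPT$ is concentrated on a long path structure, as in the $\sqrt{n}$-paths example.
\end{itemize}

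To combine them, I would case-split on a parameter $\ell$, the typical depth of revealed edges below $v^*$. Either a random vertex reveals about $\OPT\cdot\ell/n$ edges in expectation, or there are at most $n/\ell$ candidate start vertices, each of which reveals about $\OPT/\ell$ edges. Balancing gives a ratio of $\min(n/\ell,\ \ell^2)$ up to logarithmic factors, which is $n^{2/3}$ at $\ell=n^{1/3}$.

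The hardest step is justifying the bound for (a). A vertex other than $v^*$ sees different bottlenecks, and capacities are adversarial. I would try a charging argument using unique shortest paths: if $u\in S_i$, then along the path from $u$ to $v^*$, the edge $e_i$ has minimum capacity on the $v^*$ side. So $u$ reveals either $e_i$ or a smaller-capacity edge lying in $S_i$'s subtree. I would then count charges so that each revealed edge receives only $\tilde{O}(\ell)$ of them.

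\textbf{Lower bound.} I would generalize the $\sqrt{n}$-paths construction by adding a hub layer. Take $n^{1/3}$ "broom" gadgets. In each, a path of length $n^{2/3}$ from a root $r$ fans into $n^{1/3}$ subpaths of length $n^{1/3}$, or alternatively fans out to many leaves through a grid-like structure so that each vertex sees only a few others. Capacities decrease along the path in one planted gadget and its chosen branch, and are random elsewhere. Then $\OPT\approx n^{2/3}$.

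To bound the algorithm, apply Yao's principle. Random capacities reveal only $O(\log n)$ edges by the record-breaking argument. Any fixed query hits the planted structure with probability $\tilde{O}(n^{-1/3})$, and even a hit reveals only $\tilde{O}(n^{1/3})$ edges unless the query is exactly the planted root, which happens with probability $n^{-2/3}$. This gives $\mathbb{E}(\ALG)=\tilde{O}(1)$ and a ratio of $\tilde{\Omega}(n^{2/3})$.

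Designing this gadget so that both failure modes are simultaneously capped at $\tilde{O}(1)$ expected revealed edges is the delicate part. I would tune the branch lengths so the two terms balance at exponent $2/3$.
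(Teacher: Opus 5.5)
\textbf{Upper bound.} Your branch (a) argument does not work. If $u\in S_i$, the $u$--$v^*$ shortest path is the same path as the $v^*$--$u$ path, so the pair $(u,v^*)$ reveals exactly $e_i$. There is no further smaller-capacity edge to charge. Charging only to $\OPT$'s edges through pairs involving $v^*$ therefore gives $\sum_i|S_i|/n\le 1$ expected edges, which is the trivial ratio. The dichotomy in $\ell$ is asserted rather than derived, and nothing produces the $\ell^2$ term. Read literally, uniform sampling among $n/\ell$ candidates that each reveal $\OPT/\ell$ edges gives ratio $\ell$, hence $\max_\ell\min(n/\ell,\ell)=\sqrt n$ overall. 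The lower bound excludes that. The ideas you are missing are these:
\begin{enumerate}
\item A hitting set $S$ of size $\tilde O(n^{2/3})$, built greedily by residual vertex betweenness centrality. It is used together with the subpath fact: if $e=b(v^*,t)$ and $s\in S$ lies on that path, then $e=b(v^*,s)$ or $e=b(s,t)$. So $e\in R(s)$, and uniform sampling from $S$ covers all such edges.
\item The residual degree $\deg_S(v^*)$, which bounds the number of $\OPT$ edges not covered by $S$.
\item For the residual pairs $P$ with maximum edge betweenness $B$, the Cauchy--Schwarz bound $|P|=\sum_e|Q_e|\le\sqrt{B}\sum_e\sqrt{a_eb_e}$. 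It shows that a query drawn uniformly from the high-residual-degree set $T$ or from $V$ reveals $\Omega(|P|/\sqrt{B|T|n})$ edges in expectation. These edges need not be $\OPT$'s edges at all, and this global counting is where the algorithm gains over the trivial bound.
\item Iteratively lowered degree thresholds that balance $|S|$, residual degree, and $B$.
\end{enumerate}

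\textbf{Lower bound.} The broom construction fails. There are only $n^{1/3}$ brooms, hence only $n^{1/3}$ structurally identical candidate roots. Querying the root of a uniformly random broom hits the planted root with probability $n^{-1/3}$, not $n^{-2/3}$, and then reveals $\Omega(n^{2/3})$ edges. Any vertex in the first half of the planted path does as well: capacities decrease away from $r$, so every planted edge beyond it is the bottleneck of the path to its far endpoint. Hence $\mathbb{E}[\ALG]=\Omega(n^{1/3})$ while $\OPT=O(n^{2/3})$, a ratio of only $O(n^{1/3})$. More fundamentally, your gadgets are trees, and trees already admit an $O(\sqrt n)$-competitive algorithm, so no tree-like gadget can force $n^{2/3}$. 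The grid-like alternative is left unspecified.

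The paper instead arranges $\OPT=\tilde\Theta(n)$ against $\mathbb{E}[\ALG]=\tilde O(n^{1/3})$:
\begin{enumerate}
\item The optimal vertex is hidden among $\tilde\Theta(n^{2/3})$ symmetric candidates feeding a non-tree two-phase expansion (butterflies in general).
\item $\tilde\Theta(n)$ filler edges sit beyond the expansion. Their capacity is below the hidden candidate's high-capacity edges but above the other layer edges. So the hidden root reveals all of them, while any other query reveals only $O(n^{1/3})$ in expectation.
\item Low-capacity shortcut structures, whose top levels contain only $\tilde O(n^{1/3})$ edges, cap what every other query can reveal.
\end{enumerate}
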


Recall that for the single-query setting it was previously known that the correct answer is somewhere between $\tilde{\Omega}(\sqrt{n})$ and $O(n)$; \Cref{thm:single-main} shows that it is in fact $\tilde{\Theta}(n^{2/3})$.

For the general multi-query setting we show that the correct answer is surprisingly complex, with three different cases for different parameter regimes.  We use $\OPT_k$ to denote the number of edges revealed by an optimal algorithm with $k$ queries, and $\ALG_{\alpha k}$ to denote the number of edges revealed by our algorithm with $\alpha k$ queries. Our main theorem is as follows.

\begin{restatable}{theorem} {main}
\label{thm:main}
Given a weighted $n$-vertex $m$-edge graph with unique shortest paths between reachable pairs, for any positive integers $\alpha,k$ with $\alpha k\le n$, the best possible expected competitive ratio for the vantage point selection problem in the non-adaptive setting is
$$
\frac{\OPT_k}{\mathbb{E}(\ALG_{\alpha k})}=
\tilde{\Theta}\!\left(
\min\left\{
\frac{n}{\alpha k},
\max\left(
\sqrt{\frac{n}{\alpha}},
\frac{n^{2/3}}{\alpha k^{1/3}}
\right)
\right\}
\right).
$$ There is a randomized non-adaptive algorithm achieving this competitive ratio that runs in time $\tilde{O}(n(n+m))$.
\end{restatable}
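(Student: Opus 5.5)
The plan is to prove the upper and lower bounds separately. Each of the three terms in the $\min/\max$ gets its own algorithmic component and its own hard instance.

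\textbf{Upper bound.} The algorithm will be a random mixture of three strategies, each run with constant probability (or each given a constant fraction of the $\alpha k$ queries). This way the expected revealed count is at least a constant times the best of the three.

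The term $n/(\alpha k)$ comes from querying $\alpha k$ vertices chosen so that the queries reveal $\Omega(\alpha k)$ distinct edges. For example, pick queried vertices and account for one incident edge each. If isolated vertices are a problem, restrict attention to vertices incident to edges, which only helps, since $\OPT_k\le k(n-1)$.

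The other two terms come from a structural charging argument. Fix the optimal query set $Q^*$. For each $q\in Q^*$, the revealed edges are bottleneck edges $e$ on shortest paths $q\to t$. I would orient each revealed edge $e=(u,v)$ toward the far endpoint and observe a key fact: for any vertex $x$ on the subpath from $q$ to $u$, querying $x$ also reveals $e$ whenever $e$ is still the bottleneck of the $x$-to-$t$ subpath. This always holds, since a subpath's minimum that lies in the subpath is the same minimum. Hence the edges revealed by $q$ are also revealed by querying many vertices in the ``upstream'' region of $q$'s shortest path tree.

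Two regimes then arise. If many revealed edges lie at hop-distance $\ge d$ from $q$, then a random vertex query hits an upstream vertex with probability roughly (number of upstream vertices)$/n$. If instead most revealed edges are within a small ball, I would use the degree/size structure of shortest-path trees. The strategy here is to sample vertices with probability proportional to a weighting based on the sizes of shortest-path-tree subtrees, computed in $\tilde O(n(n+m))$ by building all $n$ SSSP trees. The $n^{2/3}/(\alpha k^{1/3})$ term should follow from balancing a parameter $d$: either the $\OPT$ edges are spread so that uniform random queries catch a $\Theta(\alpha k d/n)$ fraction, or they are concentrated on few long paths and a ``heavy-subtree'' sampling catches them. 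The $\sqrt{n/\alpha}$ term arises as a floor from the record-breaking (tree-like) case and is handled by the tree algorithm's analysis~\cite{Ashvinkumar2025-bh}, adapted to general graphs via SSSP trees. Optimizing the threshold $d$ against $n,k,\alpha$ should produce exactly the $\max$ expression.

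\textbf{Lower bound.} I would use Yao's principle against deterministic non-adaptive query sets on randomized instances. The $\sqrt{n/\alpha}$ term is the disjoint-paths construction: take $n/\ell$ paths of length $\ell$, with $\OPT$ using one query on a hidden monotone path. The $n/(\alpha k)$ term uses $k$ hidden monotone structures hidden in a random permutation, so $\OPT_k\approx n$ while random capacities give only polylog edges per query. The $n^{2/3}/(\alpha k^{1/3})$ term needs a new construction. I would try $k$ ``broom'' gadgets, each consisting of a path of length $L$ attached to a star or clique of $S$ vertices, with the capacities on one hidden gadget arranged so that the apex query reveals $L\cdot S$-many pairs' bottlenecks. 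Choosing $L,S$ as powers of $n/k$ should give ratio $n^{2/3}/(\alpha k^{1/3})$. Recording-breaking arguments bound a non-informed query's gain by $O(\log n)$ per path.

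\textbf{Main obstacle.} I expect the hardest step to be the upper-bound charging in general graphs. Since shortest paths from different sources overlap arbitrarily, showing that some efficiently samplable distribution catches enough of $\OPT$'s revealed edges in the middle regime is delicate. I would first try to prove a lemma of the form: for any $q$ and any set $R$ of $r$ edges revealed by $q$, a vertex drawn from the sampling distribution reveals $\tilde\Omega(\min(r,\, r^{3/2}/n^{1/2}, \dots)/n)$ of them in expectation, and then sum this over the $k$ optimal queries.
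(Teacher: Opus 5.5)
\textbf{Upper bound.} There are genuine gaps here.

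First, your argument for the $n/(\alpha k)$ term loses a factor of $k$. Revealing $\Omega(\alpha k)$ distinct edges against $\OPT_k\le k(n-1)$ only yields ratio $O(n/\alpha)$. What works is a witness argument. Every edge of $F^\star=\bigcup_{i}R(x_i)$ is revealed by some optimal vantage point $x_i$, and $\alpha k$ uniform samples from $V$ contain a fixed $x_i$ with probability $\Omega(\alpha k/n)$. By linearity, $\mathbb{E}[\ALG_{\alpha k}]=\Omega(\frac{\alpha k}{n}\OPT_k)$.

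Second, the ``one component per term, keep the best'' framing cannot be correct. It would give ratio $\min\{n/(\alpha k),\sqrt{n/\alpha},n^{2/3}/(\alpha k^{1/3})\}$, which is $\sqrt n$ at $k=\alpha=1$ and contradicts the $\tilde\Omega(n^{2/3})$ lower bound. For the same reason, no adaptation of the tree analysis can carry over to general graphs. In the paper, the $\max$ is a single guarantee $\beta$ subject to two constraints:
\begin{itemize}
\item $I=\lceil\alpha k\beta\rceil\ge kn/\beta$, so that the cap $I$ in the reveal guarantee does not bind; this forces $\beta\gtrsim\sqrt{n/\alpha}$;
\item $n^2/(\alpha^3k\beta^3)\le\frac12$, so that the residual-degree threshold at least halves each round; this forces $\beta\gtrsim n^{2/3}/(\alpha k^{1/3})$.
\end{itemize}

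Third, and most seriously, the middle regime has no mechanism. Your upstream observation is correct: every vertex of $\SPV(q,t)$ reveals $b(q,t)$. This is exactly what justifies sampling from a hitting set. However, your hop-distance dichotomy is vacuous on graphs of hop-diameter $O(\log n)$, such as the butterfly-based hard instance. So any gain over the trivial $n/(\alpha k)$ must come from the unspecified ``heavy-subtree'' sampling. The missing ingredients are:
\begin{itemize}
\item[(i)] A greedy hitting set $S$ of size $\tilde O(\alpha k\beta)$, built by repeatedly taking the vertex of maximum residual betweenness. After $I$ picks, every edge lies on at most $|P|/I$ residual paths.
\item[(ii)] A split of $\OPT$'s edges into two kinds. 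Those whose optimal shortest path meets $S$ are recovered by sampling uniformly from $S$. The residual ones number at most $k\max_v\deg_{\mathcal P\setminus P(S)}(v)$.
\item[(iii)] A Cauchy--Schwarz lemma for a family $P\subseteq T\times V$ with edge betweenness at most $B$. Let $Q_e$ be the pairs whose bottleneck is $e$, and $a_e,b_e$ their numbers of distinct first and second endpoints. Then $|Q_e|\le\min\{a_eb_e,B\}\le\sqrt{a_eb_eB}$, which gives $(\sum_e a_e)(\sum_e b_e)\ge|P|^2/B$. Hence a query drawn half from $T$ and half from $V$ reveals at least $|P|/(2\sqrt{B|T|n})$ edges in expectation.
\item[(iv)] A nested iteration on the residual-degree threshold. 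It ends either with small residual degrees or with $\tilde\Omega(k\hat d/\beta)$ revealed edges.
\end{itemize}
The $\tilde O(n(n+m))$ bound also needs more than computing all shortest-path trees. The greedy step requires maintaining residual betweenness under pair deletions, which the paper does with Euler-tour intervals in each tree.

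\textbf{Lower bound.} Your Yao-style framework is the right one: fix the query set, average over a hidden choice, then fix a good one. But the instances do not give the claimed bounds.
\begin{itemize}
\item The disjoint-paths family cannot certify more than $\tilde\Theta(\sqrt n/(\alpha\sqrt k))$ (maximize $\ell/(\alpha(k\ell^2/n+\log n))$ over $\ell$). This is a factor $\sqrt{\alpha k}$ short of $\sqrt{n/\alpha}$.
\item For the $n/(\alpha k)$ term, monotone paths cannot be hidden. $k$ paths covering $\approx n$ vertices leave no room for decoys. Moreover, a random vertex of a decreasing path already reveals about half of it on average.
\item The broom sketch counts pairs rather than distinct edges. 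A query reveals at most $n-1$ edges, and a broom has only $L+S$ or $L+\binom{S}{2}$ edges. Nothing prevents a random query in the star or clique from revealing $\Theta(S)$ edges. Per-path record-breaking bounds do not control a query whose shortest-path tree branches to many targets.
\end{itemize}
The missing ideas are:
\begin{itemize}
\item hiding each optimal vantage point among many symmetric candidates while giving its shortest-path tree expansion (butterflies, or two-level $\tilde\Theta(n^{1/3})$-ary trees);
\item filler vertices shared across parts but attached by part-specific filler edges, so the $k$ hidden queries reveal disjoint sets and $\OPT_k\ge kPQ\tau=\tilde\Omega(kn)$;
\item low-weight shortcut trees whose top edges carry the smallest capacities, so a non-optimal query's paths funnel through few small-capacity edges and it reveals only $\tilde O(\tau)$ edges in expectation over the hidden choice.
\end{itemize}
The paper uses one such construction with $\tau=\max\{k,\min\{\sqrt{n/\alpha},(kn)^{1/3}\}\}$. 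For it, the ratio $n/(\alpha\tau)$ equals the target bound in every regime.
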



This is both the first non-trivial algorithm for this problem on general graphs, and it is tight (up to polylogarithmic factors) for the entire range of parameters.

We remark that the $\tilde{O}(n(n+m))$ running time of our algorithm means that the algorithm provides no additional overhead (up to polylogarithmic factors) beyond computing all-pairs shortest paths. 

To help interpret the rather complicated bound of \Cref{thm:main}, we provide two plots in \Cref{fig:improvement-bounds}. The right plot visualizes the entire parameter regime in three dimensions, while the left plot visualizes the case where the optimal algorithm gets a single query ($k=1$).

\begin{figure}[H]
    \centering
    \includegraphics[width=1\textwidth]{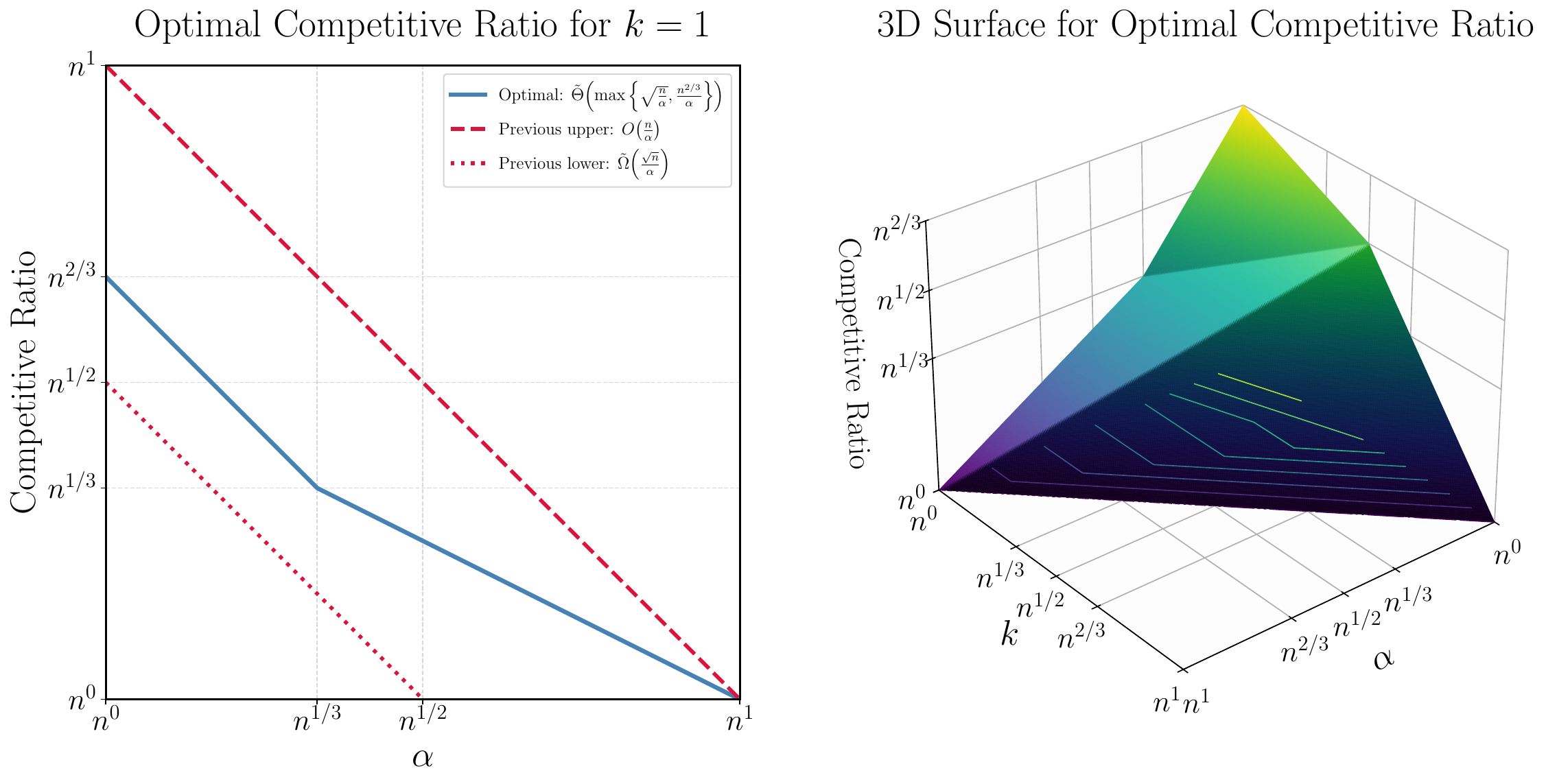}
    \caption{Optimal expected competitive ratio for non-adaptive vantage point selection, with polylogarithmic factors suppressed. \textbf{Left:} The case $k=1$, showing our tight bound together with the previous upper and lower bounds. The $y$-intercept at $n^{2/3}$ corresponds to the single-query setting of \Cref{thm:single-main}, while the $x$-intercept corresponds to the trivial regime in which the algorithm gets $n$ queries and the competitive ratio is $1$. \textbf{Right:} The three-dimensional surface over the $(\alpha,k)$ parameter space for the tight bound.} 
      \label{fig:improvement-bounds}
\end{figure}


We leave as an open problem whether our lower bounds can be extended to the adaptive setting, or if there is an adaptive algorithm that surpasses our non-adaptive lower bound.

\section{Technical Overview}\label{sec:tech}

\subsection{Algorithm}

For simplicity, in this overview we will focus on the single-query ($\alpha=k=1$) setting. Recall that in this setting, our algorithm achieves competitive ratio $\tilde{O}(n^{2/3})$. We will start by outlining a simpler algorithm with competitive ratio $\tilde{O}(n^{6/7})$ for the class of graphs with hop-diameter $\tilde{O}(1)$. After that, we will describe some ideas for improving the algorithm to $\tilde{O}(n^{2/3})$ for general graphs.

\paragraph{$\tilde{O}(n^{6/7})$-competitive algorithm for low hop-diameter graphs.}
Let $v_{\OPT}$ be the vertex queried by the optimal algorithm. 
A natural idea is to query a vertex incident to an edge with high \emph{betweenness centrality}, i.e.~an edge on many shortest paths. Intuitively, one may hope that such a vertex hits many of the shortest paths from $v_{\OPT}$. Indeed, this is one ingredient of the algorithm: With probability 1/3, we uniformly sample a vertex from the set $S$ of vertices incident to an edge with high betweenness centrality. Here, ``high'' means at least $n^{8/7}$. It will be useful to note that $|S| =\tilde{O}(n^{6/7})$, which is true for the following reason. The hop-diameter is $\tilde{O}(1)$, so the total number of pairs of the form (shortest path $p$, edge on $p$) in the entire graph is $\tilde{O}(n^2)$, and each edge of high betweenness centrality contributes $\geq n^{8/7}$ to this total.

It turns out that this type of query only provides competitive ratio  $\tilde{O}(n^{6/7})$ when the graph structure and capacity assignment have certain properties. This is why we perform this query with only probability 1/3, and later we will establish the other types of queries needed. First, we will describe the scenario in which the above query is effective. The important property is: 

\begin{center} $\leq n^{6/7}$ of vertices are such that their shortest path from $v_{\OPT}$ avoids $S$. 
\end{center}

In other words, we say that the \emph{residual degree} of a vertex $v$ with respect to $S$ is the number of reachable vertices $u$ for which the shortest $uv$-path avoids $S$. Using this terminology, the above property says that $v_{\OPT}$ has residual degree $\leq n^{6/7}$.

If this property holds, we are done for the following reason. First, since the shortest path from $v_{\OPT}$ to a fixed vertex only reveals one  bottleneck edge, the shortest paths from $v_{\OPT}$ that avoid $S$ can only reveal $\leq n^{6/7}$ edges. Furthermore, if $\OPT<2\cdot n^{6/7}$ then we are done because any algorithm that reveals at least one edge is already $O(n^{6/7})$-competitive. Combining these two observations, at least half of the edges revealed by $v_{\OPT}$ are on shortest paths from $v_{\OPT}$ that are hit by $S$. For each edge $e$ on such a shortest path $p$ hit by $S$, querying the vertex in $S$ on $p$ also reveals $e$, since $e$ is the bottleneck edge for the entire path $p$. Thus, if we queried all $\tilde{O}(n^{6/7})$ vertices in $S$, we would reveal a constant fraction of the edges revealed by $v_{\OPT}$. So, if we pick one randomly, $\mathbb{E}[\ALG] = \tilde{\Omega}(\OPT/n^{6/7})$ as desired.

Now we need to consider the case when the residual degree of $v_{\OPT}$ with respect to $S$ is more than $n^{6/7}$. We will condition on the number of other vertices in the graph that also have high ($>n^{6/7}$) residual degree with respect to $S$. If this number is at most $n^{6/7}$, then we claim we are done by simply randomly sampling such a vertex. This is because we would query $v_{\OPT}$ itself with probability $\Omega(1/n^{6/7})$, so $\mathbb{E}[\ALG] = \Omega(\OPT/n^{6/7})$ as desired. That is, the second step of the algorithm is to sample, with 1/3 probability, from the set of vertices with high residual degree with respect to $S$.

The last remaining case is when ``many'' ($>n^{6/7}$) vertices in the graph have ``high'' ($>n^{6/7}$) residual degree with respect to $S$. Let $P$ be the set of ordered pairs $(u,v)$ of vertices where $u$ has a high residual degree, and the shortest $uv$-path avoids $S$. There are $>n^{12/7}$ such pairs. By the definition of $S$, every edge on a shortest path between a pair of vertices in $P$ has betweenness centrality at most $n^{8/7}$. We will use these observations to argue that, for the remaining 1/3 probability, it suffices for the algorithm to take a uniform random sample from all vertices in the graph. Our goal is to show that the total sum over all vertices $v$ of the number of edges $v$ reveals is $\tilde{\Omega}(n^{8/7})$, which implies that a uniformly random vertex reveals $\tilde{\Omega}(n^{1/7})$ edges in expectation, yielding competitive ratio $\tilde{O}(n^{6/7})$.

For each edge $e$ on some path in $P$, let $Q_e$ denote the set of vertex pairs in $P$ whose bottleneck edge is $e$. The betweenness centrality constraint on such edges means that $|Q_e|\leq n^{8/7}$. For all $e$, any query vertex that is an endpoint of a pair in $Q_e$ reveals $e$, and the number of such endpoint vertices is at least $\sqrt{|Q_e|}$. Thus, $\sum_e \sqrt{|Q_e|}$ is a lower bound on the quantity we wish to bound: the total sum over all vertices $v$ of the number of edges $v$ reveals. Since there are $>n^{12/7}$ pairs in $P$, and each pair belongs to exactly one set $Q_e$, we have $\sum_e |Q_e|>n^{12/7}$. To minimize $\sum_e \sqrt{|Q_e|}$ subject to the constraints $|Q_e|\leq n^{8/7}$ and $\sum_e |Q_e|>n^{12/7}$, take $n^{4/7}$ of the $|Q_e|$ values to be $n^{8/7}$. For these $n^{4/7}$ values, $\sqrt{|Q_e|} = n^{4/7}$, so $\sum_e \sqrt{|Q_e|} \geq n^{4/7}\cdot n^{4/7} =n^{8/7}$, as desired.

In summary, the algorithm samples uniformly from the following three sets with probability 1/3 each:
\begin{itemize}[itemsep=0em]
    \item the set $S$ of vertices incident to an edge with high betweenness centrality,
    \item the set of vertices with high residual degree with respect to $S$,
    \item the set of all vertices in the graph.
\end{itemize}

\paragraph{Improving to $\tilde{O}(n^{2/3})$ for general graphs.}
The framework is similar to the previous algorithm: we would like to identify a ``small'' set $S$ of vertices so that if we consider the set $P$ of ordered pairs $(u, v)$ of reachable vertices whose shortest paths avoid $S$ and $u$ has high
residual degree, at least one of two objectives is satisfied:
\begin{enumerate}
    \item $v_{\OPT}$ is in ``few'' pairs in $P$ (i.e.~$v_{\OPT}$ has ``low'' residual degree with respect to $S$), OR
    \item all edges in the graph have ``low'' betweenness centrality with respect to $P$. 
\end{enumerate}
Objective 1 is analogous to the initial ``important property'' of the previous algorithm, and objective 2 is analogous to the last case of the previous algorithm which handles the pairs $P$ containing only edges of low residual betweenness centrality. 

In the previous algorithm, we had fixed thresholds for ``low'' residual degree and ``low'' betweenness centrality. The key development in the improved algorithm is to replace these fixed thresholds with parameters that are iteratively adjusted over the course of the algorithm, as the set $S$ is gradually built. This adds intricacy to the algorithm because the iteratively adjusted parameters are interdependent; changing one parameter changes the others. To achieve the right balance, the final algorithm has several nested subroutines that each adjust different parameters.

We begin by considering the subroutine for building $S$. Since $S$ should be built with objective 2 in mind, we would intuitively like to add to $S$ vertices of high betweenness centrality with respect to the current $P$ so that we are ultimately left with all edges only having low betweenness centrality with respect to $P$. The algorithm does this in a greedy manner: roughly speaking, it iteratively adds to $S$ the vertex of highest betweenness centrality with respect to the current $P$. 

This greedy subroutine is executed in phases, and after a phase of adding a subset of vertices to $S$, the algorithm checks if the maximum residual degree in the graph has decreased by a lot; this is our proxy for the residual degree of $v_{\OPT}$ in objective 1, since we cannot calculate that directly without knowing $v_{\OPT}$. (The final algorithm actually checks a slightly different quantity, but this is the intuition.) If the maximum residual degree has decreased by a lot, then we are intuitively in good shape with respect to objective 1. In this case, we proceed to the next phase of adding vertices to $S$, and if the maximum residual degree continues to decrease by a lot at each phase, we can bound the number of phases until $v_{\OPT}$ is only in a constant number of pairs in $P$. 

On the other hand, if the maximum residual degree does not decrease by a lot, then we claim that we are intuitively in good shape with respect to objective 2. This is because the maximum residual degree \emph{not} decreasing much is intuitively aligned with $|P|$ not decreasing much, which means that the vertices with highest betweenness centrality with respect to $P$ that we greedily added to $S$ actually had low residual betweenness centrality with respect to $P$, in line with objective 2.

A key challenge in designing the algorithm is the interdependence of the parameters: adding more vertices to $S$ decreases three quantities: the size of $P$, the maximum residual degree, and the betweenness centrality with respect to $P$. In the above analysis of the original $\tilde{O}(n^{6/7})$ algorithm, the calculations get more favorable as $|P|$ \emph{increases}, as the maximum residual degree \emph{decreases}, and as the betweenness centrality with respect to $P$ \emph{decreases}. 

Due to this interdependence, the final algorithm has multiple nested iterative procedures. In the outer loop, we run the entire algorithm with geometrically decreasing thresholds for the maximum residual degree. Within that, we perform the previously described procedure, which iteratively adds vertices to $S$ and checks if we have achieved objective 1 or 2. Finally, within that we perform the iterative procedure to greedily add vertices to $S$ one-by-one.

These are the main ideas behind the single-query version of the algorithm. For the multi-query version, we need a generalization of the algorithm in which the parameters $k$ and $\alpha$ are incorporated into the above framework and balanced with the other parameters.

Implementation: It turns out there is a straightforward implementation of our algorithm that runs in time $\tilde{O}(n^3)$. To reduce this to $\tilde{O}(n(n+m))$, we require additional data structures. 

\subsection{Lower Bound}


Since the full construction is rather involved, we present two simpler constructions that capture the main ideas behind the lower bound. The first construction is for the regime of large $\alpha$, $k$, while the second is for the single-query case $\alpha=k=1$. The full construction can be viewed as combining these two ideas, and its details are presented in \Cref{sec:lb}.

\subsubsection{Case 1: Large-parameter regime}


\begin{figure}[h!]
    \centering
    \includegraphics[width=0.6\textwidth]{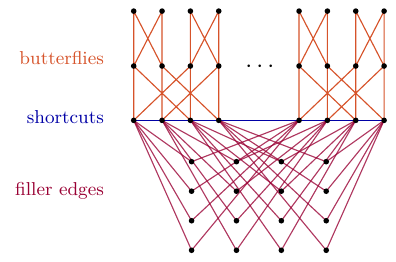}
    \caption{The lower-bound construction for the large-parameter regime.}
    \label{fig:case1}
\end{figure}


In this section, we will provide the ideas for a lower bound of 
$$\tilde{\Omega}\left(\min\left\{\frac{n}{\alpha k},\sqrt{\frac{n}{\alpha}}\right\}\right).$$
This reaches the optimal competitive ratio in the regime $\alpha^3k^2\ge n$.
The construction underlying this lower bound is illustrated in \Cref{fig:case1}.

We begin with the main intuition. First, it is useful to build large parts of the graph that are highly symmetric, in the sense that many vertices are indistinguishable from the algorithm's perspective. This allows us to hide the optimal vantage points among many candidate vertices: the hidden capacities can be chosen to favor a small subset of these candidates, while a non-adaptive algorithm cannot reliably distinguish this subset from the others.

Second, the optimal vantage points need to reveal many edges. A natural way to achieve this is to make the shortest-path trees rooted at the optimal vantage points exhibit expansion, i.e. to make them look more like balanced binary trees than long paths.

\emph{Butterfly graphs} are useful for this purpose: they provide both symmetry and expansion. Thus, the starting point of the construction is a collection of butterfly graphs, each of which contains one hidden optimal vantage point in its top layer.

This alone is not sufficient, because querying a non-optimal vertex in a butterfly may also reveal many edges. To overcome this issue, we further exploit the fact that each edge can be revealed only once. The goal is to make it so that the edges revealed by the $k$ optimal queries are almost disjoint, whereas the edges revealed by the $\alpha k$ non-optimal queries overlap heavily.

We realize this idea by introducing \emph{filler edges} and \emph{shortcuts} incident to the bottom of the butterflies. The filler edges are the many nearly unique edges revealed by the optimal queries, and their structure can be viewed as a collection of stars whose leaves are bottom butterfly vertices. The shortcuts form low-weight shortcut structures near the bottom of the butterflies, and their purpose is to isolate the filler edges from most non-optimal queries, preventing such queries from revealing too many filler edges.

By tuning the butterfly sizes and the number of filler edges, this construction gives the desired lower bound in the large-parameter regime.

Lastly, we note that this lower bound crucially only works in the non-adaptive setting. In the adaptive setting, querying a non-optimal butterfly vertex could reveal information about the location of the optimal query within that butterfly.

\subsubsection{Case 2: $\alpha=k=1$}
\begin{figure}[h!]
    \centering
    \includegraphics[width=0.8\textwidth]{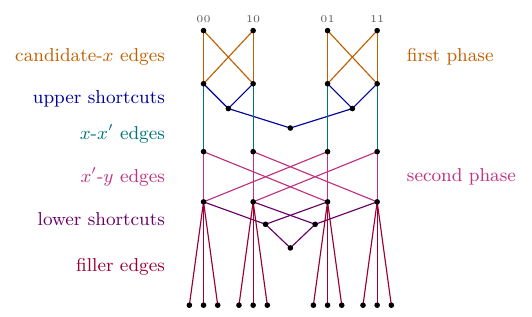}
    \caption{The lower bound construction for the single-query case with $w=2$. }
    \label{fig:case2}
\end{figure}

In the single-query case, our goal is to prove an $\tilde{\Omega}(n^{2/3})$ lower bound. The construction underlying this lower bound is illustrated in \Cref{fig:case2}.

Here, we need to hide the optimal vantage point among $\tilde{\Omega}(n^{2/3})$ candidate points, while still ensuring that querying the optimal vantage point reveals nearly linearly many edges. At the same time, a capacity-unaware query should reveal only $\tilde{O}(n^{1/3})$ edges.

The main issue is that the butterfly-based construction above no longer gives the right balance. Suppose we continue using butterflies, each of size $s$. If $s\le \sqrt n$, then there are not enough candidate points to hide the optimum well enough. On the other hand, if $s>\sqrt n$, then a random query inside the butterfly may reveal too many edges. We therefore replace the butterfly with a two-phase expansion.

 Fix an integer parameter $w=\tilde{\Theta}(n^{1/3})$. Let $u_{p,q}$, where $p,q\in \Zx{w}$, be the candidate optimal vantage points. These vertices form the candidate layer. We also introduce three layers with the same index set: an $x$-layer, an $x'$-layer, and a $y$-layer. Finally, as in the previous construction, we add a filler layer that supplies the many edges revealed by the hidden optimal query. For the moment, we focus on the non-filler part of the construction.

In the \emph{first phase}, each $u_{p,q}$ is connected to $x_{p',q}$ for every $p'\in\Zx{w}$, forming the \emph{candidate-$x$ edges}. Between the $x$-layer and the $x'$-layer, the \emph{$x$-$x'$ edges} are one-to-one: $x_{p,q}$ is connected to $x'_{p,q}$. In the \emph{second phase}, each $x'_{p,q}$ is connected to $y_{p,q'}$ for every $q'\in\Zx{w}$, forming the \emph{$x'$-$y$ edges}. Consequently, after two phases, every candidate optimal point can reach every vertex in the $y$-layer. From the perspective of the shortest-path trees rooted at the candidate points, we have replaced balanced binary trees with perfect $O(n^{1/3})$-ary trees of height two.

We must still prevent arbitrary queries in this structure from revealing too many edges. To do this, we introduce shortcut structures on the $x$-layer and the $y$-layer, referred to as \emph{upper shortcuts} and \emph{lower shortcuts}, respectively. Each shortcut structure is itself a perfect $O(n^{1/3})$-ary tree of height two. The point is that, for a fixed query vertex, most shortest paths that use a shortcut pass through the top level of such a shortcut tree, which contains only $\tilde{O}(n^{1/3})$ edges. We set the capacities of these top-level shortcut edges to be small, so that they become the bottleneck edges on these paths. This bounds the number of distinct edges that such a query can reveal through the shortcut structures.

The remaining important paths are those from the candidate layer to the $y$-layer, which are arranged not to use the shortcut structures. We set the capacities of the $x$-$x'$ edges to be smaller than those of the other layer-to-layer edges on these paths, so that the contribution of all such candidate-to-$y$ paths is also bounded. The filler layer is then attached beyond the $y$-layer so that the hidden optimal candidate can still reveal nearly linearly many distinct filler edges, while non-optimal queries remain limited by the shortcut and two-phase-expansion structure described above.

\section{Preliminaries}

Let $G=(V,E)$ be an undirected weighted graph with $|V|=n$. We assume throughout this paper that the shortest path between every pair of vertices (if it exists) is unique. For vertices $u,v\in V$, write $u\sim v$ if $u$ and $v$ lie in the same connected component. Let 
$$
\mathcal P := \{(u,v)\in V\times V \mid u\ne v \text{ and } u\sim v\}
$$
be the set of reachable ordered pairs.
For $(u,v)\in \mathcal P$, let $\SPV(u,v)$ and $\SPE(u,v)$ denote, respectively, the set of vertices (including $u$ and $v$) and the set of edges on the unique shortest path from $u$ to $v$. 

Each edge $e\in E$ is associated with a capacity value $c(e)>0$. The capacities are fixed but unknown to the algorithm. We further assume that all capacities are distinct. Hence, for every ordered pair $(u,v)\in \mathcal P$, the shortest path from $u$ to $v$ contains a unique edge with minimum capacity. We denote this edge by
$$
b(u,v):=\arg\min_{e\in \SPE(u,v)} c(e),
$$
and refer to it as the \emph{bottleneck edge} of the pair $(u,v)$.

\paragraph{Query model.}
A \emph{query} consists of selecting a vertex $u\in V$ as a \emph{vantage point}. Querying $u$ reveals the bottleneck edge on the shortest path from $u$ to every other vertex in the same connected component. Formally, define
$$
R(u):=\{b(u,v)\mid (u,v)\in\mathcal P\}\subseteq E.
$$
Thus, $R(u)$ is the set of edges revealed by querying $u$.

If the algorithm outputs an ordered list of vantage points $v_1,\dots,v_{\alpha k}$, then the total set of revealed edges is
$$
\bigcup_{i=1}^{\alpha k} R(v_i).
$$

\paragraph{Objective.}
The algorithm is given the graph structure, edge weights, and hence the shortest-path system of $G$, but it does not know the capacities $\{c(e)\}_{e\in E}$. Its goal is to choose a small number of vantage points so as to reveal as many bottleneck edges as possible.

For a positive integer $k$, define $\OPT_k$ as the maximum number of edges that can be revealed by choosing $k$ vantage points, assuming full knowledge of the hidden capacities, i.e.,
$$
\OPT_k
:=
\max_{s_1,\dots,s_k\in V}
\left|
\bigcup_{i=1}^k R(s_i)
\right|.
$$

If an algorithm outputs $\alpha k$ vantage points $v_1,\dots,v_{\alpha k}$, we write
$$
\ALG_{\alpha k}
:=
\left|
\bigcup_{i=1}^{\alpha k} R(v_i)
\right|
$$
for the number of edges revealed by the algorithm. When the algorithm is randomized, $\ALG_{\alpha k}$ is a random variable, and guarantees are understood in expectation unless stated otherwise. 

\paragraph{Competitiveness with an augmented query budget.}
We compare an algorithm that is allowed to make $\alpha k$ queries against the benchmark optimum that uses only $k$ queries, where $\alpha\ge 1$ is an augmentation factor. This bicriteria viewpoint is the one studied throughout the paper: the algorithm is given a larger (or equal) query budget, and must compete with the optimal capacity-aware choice of only $k$ vantage points. Note that our results also cover the non-augmented version where $\alpha=1$. 

\paragraph{Non-adaptivity.}
The focus of this paper is the \emph{non-adaptive} setting. A non-adaptive algorithm must choose all of its vantage points before seeing any capacity information or any revealed bottleneck edges. Equivalently, its output distribution may depend on the graph and the shortest-path structure, but not on the hidden capacity assignment.

\section{Upper Bound}

In this section we prove the upper bound part of \Cref{thm:main}, which we restate here:

\begin{theorem}
    Given a weighted $n$-vertex $m$-edge graph with unique shortest paths between reachable pairs, for any positive integers $\alpha,k$ with $\alpha k\le n$, there is a randomized non-adaptive algorithm with running time $\tilde{O}(n(n+m))$ and competitive ratio
$$
\frac{\OPT_k}{\mathbb{E}(\ALG_{\alpha k})}=
\tilde{O}\!\left(
\min\left\{
\frac{n}{\alpha k},
\max\left(
\sqrt{\frac{n}{\alpha}},
\frac{n^{2/3}}{\alpha k^{1/3}}
\right)
\right\}
\right).
$$ 
\end{theorem}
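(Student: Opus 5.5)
The bound is a minimum of two terms, so I would build two algorithms and run whichever one the parameters favor. The term $n/(\alpha k)$ comes from a trivial scheme. Query $\alpha k$ vertices chosen uniformly at random without replacement. Since $\OPT_k\le k(n-1)$, it suffices to reveal $\Omega(\alpha)$ edges in expectation. Each non-isolated query reveals at least one edge incident to it. A fixed edge can be the revealed edge for at most its two endpoints in this ``incident'' sense, so $\alpha k$ distinct non-isolated queries reveal at least about $\alpha k/2$ distinct edges.

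A cleaner way to phrase the first scheme: $\OPT_k\le \sum_i |R(s_i)|$, and the expected contribution of a uniformly sampled set is proportional to $\sum_{v}|R(v)|\cdot \alpha k/n$. The loss of distinctness across the union is handled by the same charging argument. If $\OPT_k$ is small, i.e.\ below $n/(\alpha k)$ times the number revealed by this scheme, we are done. Otherwise the second algorithm must be used.

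For the main term, I would generalize the three-set scheme from the technical overview. Split the budget into three thirds, each used to sample uniformly (with replacement) from one of the following sets:
\begin{itemize}
\item a hitting set $S$ of ``heavy'' vertices;
\item the set $H$ of vertices whose residual degree with respect to $S$ exceeds a threshold $d$;
\item all of $V$.
\end{itemize}
The analysis fixes $\OPT$'s vantage points $s_1,\dots,s_k$.

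Edges revealed by some $s_i$ along paths hit by $S$ are also revealed by the $S$-vertex on that path. Querying $\alpha k/3$ random elements of $S$ therefore captures a $\tilde\Omega(\alpha k/|S|)$ fraction of them, by linearity over the edges. Edges revealed via residual paths of $s_i$ with $s_i\notin H$ number at most $kd$. If $s_i\in H$, sampling from $H$ hits each such $s_i$ with probability about $\alpha k/|H|$. Finally, the residual pair set $P$, consisting of pairs with a high-residual-degree endpoint and a path avoiding $S$, has all edge betweenness at most some bound $B$. The $\sum_e\sqrt{|Q_e|}$ convexity argument then gives that a uniform vertex reveals $\Omega(|P|/(n\sqrt B))$ edges. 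One subtlety is that with $\alpha k$ uniform queries the union may double count. I would bound the union by considering, for each edge $e$, the probability that some sampled vertex is an endpoint of a pair in $Q_e$. That probability is at least $\min(1,\alpha k\sqrt{|Q_e|}/n)$, and the convexity step is redone with this capped function. I expect the capping to be exactly what produces the $\sqrt{n/\alpha}$ term.

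Choosing $S$ with fixed thresholds only gives a weaker exponent. So, as the overview describes, I would build $S$ greedily: repeatedly add the vertex of highest betweenness with respect to the current $P$. This runs in phases, with an outer loop over geometrically decreasing residual-degree thresholds $d$. At the end of each phase there is a dichotomy:
\begin{itemize}
\item either the maximum residual degree (or a potential like $|P|$) dropped by a constant factor, in which case we continue, with only $O(\log n)$ phases overall;
\item or it did not, in which case the greedy vertices certify low betweenness $B$ relative to $|P|$, and the uniform-sampling bound applies.
\end{itemize}
The algorithm randomizes over the $O(\log n)$ thresholds and phases, losing only polylog factors. The main obstacle is the parameter balancing. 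The quantities $|S|$, $|H|$, $d$, $|P|$ and $B$ must jointly satisfy the three ratio bounds $|S|/(\alpha k)$, $|H|/(\alpha k)$ and $n\sqrt{B}\cdot kd/(\alpha k|P|)$ against the target $\max(\sqrt{n/\alpha},\,n^{2/3}/(\alpha k^{1/3}))$. I would first verify this in the case $k=\alpha=1$ with target $n^{2/3}$, setting $|S|\approx n^{2/3}$ and $|P|\gtrsim |H|\,d$, and then introduce $k$ and $\alpha$.

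Running time: all-pairs shortest-path trees take $\tilde O(n(n+m))$ via Dijkstra. Maintaining residual betweenness under greedy insertions needs a subtree-deletion data structure on each of the $n$ trees, with amortized charging. I expect that charging to be the delicate implementation point.
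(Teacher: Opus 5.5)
\textbf{The main term is where the proof lives, and the analysis you commit to for it cannot close.} You defer the parameter balancing. But with the two estimates you specify for the residual part, the balancing fails. Those estimates are (i) sampling from $H$ to hit some $s_i$ directly, and (ii) the symmetric bound that a uniform vertex reveals $\Omega(|P|/(n\sqrt{B}))$ edges.

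Take $k=\alpha=1$ and target ratio $\beta$. Then $|S|\approx I\lesssim\beta$, and greedy gives $B\lesssim|P|/I$ in a phase where $|P|$ did not halve. Suppose $|H|>\beta$. Direct hitting is then insufficient. The uniform bound gives $\sqrt{|P|I}/n\gtrsim\beta\sqrt{d}/n$ per query (using $|P|\ge|H|d$). This must dominate $D/\beta$, where $D$ is the current maximum residual degree. That forces $d\gtrsim D^2n^2/\beta^4$, and getting $d\le D/2$ at $D=n$ needs $\beta\gtrsim n^{3/4}$, not $n^{2/3}$.

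Dyadic bucketing does not rescue this. Take $|H|=n^{0.8}$ vertices of residual degree $\approx d=n^{0.9}$ and $I=n^{2/3}$. Direct hitting yields $n^{0.1}$ and the uniform bound yields $n^{0.18}$, but $d/\beta=n^{0.23}$ is needed.

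The missing idea is the paper's lemma for \textsc{RevealWithRestrictedBC}: treat $H$ as a pool of \emph{sources}, not as a way to hit $s_i$. All pairs of the residual family $Q$ start in $T=H$. Write $Q_e\subseteq A_e\times B_e$ with $A_e\subseteq T$ and $|Q_e|\le\min\{a_eb_e,B\}$. A query drawn half from $\mathrm{Unif}(T)$ and half from $\mathrm{Unif}(V)$ then reveals $e$ with probability at least $\frac12(a_e/|T|+b_e/n)\ge\sqrt{a_eb_e/(|T|n)}\ge|Q_e|/\sqrt{B|T|n}$.

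Combined with your per-edge capping over $\alpha k$ queries, this gives $\Omega(\min\{|Q|/B,\ \alpha k|Q|/\sqrt{B|T|n}\})$. Since $|Q|\ge d|T|$, and $B<2|Q|/I$ in a phase where $|Q|$ did not halve, this is $\Omega(\min\{I,\alpha k\sqrt{dI/n}\})$. The bound is independent of $|T|$; in the example it gives $n^{0.28}$.

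With this bound, hitting $s_i$ is unnecessary. We have $|F_{\mathrm{res}}|\le k\hat d$ for the final maximum residual degree $\hat d$. The schedule $d=k^2D^2n/(\beta^2(\alpha k)^2I)$ with $I=\lceil\alpha k\beta\rceil$ makes $\alpha k\sqrt{dI/n}=kD/\beta$. It contracts geometrically once $\beta^3\gtrsim n^2/(\alpha^3k)$. The saturation term $I$ dominates $kD/\beta$ for all $D\le n$ once $\beta^2\gtrsim n/\alpha$. These two constraints are exactly the two terms of the max, so your guess that capping produces $\sqrt{n/\alpha}$ is right in spirit.

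\textbf{The $n/(\alpha k)$ scheme is miscounted.} From $\OPT_k\le k(n-1)$ and about $\alpha k/2$ distinct revealed edges you only get ratio $O(n/\alpha)$, a factor $k$ short; ``$\Omega(\alpha)$ suffices'' is an arithmetic slip. Uniform queries may also land on isolated vertices. A bound on $\mathbb{E}\sum_j|R(v_j)|$ says nothing about the union, since the sets $R(v)$ can overlap almost completely.

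Argue per edge instead. Fix for each $e\in F^\star$ a witness $s_{i(e)}$ with $e\in R(s_{i(e)})$. A uniform sample of $\alpha k$ vertices contains it with probability $\Omega(\alpha k/n)$, so $\mathbb{E}[\ALG_{\alpha k}]=\Omega(\alpha k\,\OPT_k/n)$ by linearity. You also cannot branch on whether $\OPT_k$ is small; select by the parameters, or mix the branches with constant probabilities.

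For the running time, the point you leave open is settled as follows. Every pair deleted when $x$ joins $S$ has its path through $x$. So all betweenness decrements are subtree sums in the single tree $T_x$, with vertex weights counting deleted pairs that have that vertex as an endpoint. This costs $\tilde O(n^2)$ per greedy call.
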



The algorithm is built from 4 subroutines. We will first outline how each subroutine corresponds to the following description of the structure of the algorithm from the technical overview (\Cref{sec:tech}):
``The final algorithm has multiple nested iterative procedures. In the outer loop, we run the entire algorithm with geometrically decreasing thresholds for the maximum residual degree. Within that, we perform the previously described procedure, which iteratively adds vertices to $S$ and checks if we have achieved objective 1 or 2. Finally, within that we perform an iterative procedure to greedily add vertices to $S$ one-by-one.''

\Cref{alg:final} is the shell of the final algorithm, which sets the parameters, calls \Cref{alg:IR}, and makes the final selection of vantage points to query. \Cref{alg:IR} is the ``outer loop'' from the previous paragraph, which iteratively calls  \Cref{alg:ROR}. \Cref{alg:ROR} is the procedure that iteratively adds vertices to $S$ and checks if we have achieved objective 1 or 2: to add to $S$, it calls \Cref{alg:high-bc}, and if it ever achieves objective 2, it calls \Cref{alg:RWRBC} to select the appropriate vantage points.

\subsection{The Algorithm}

In this subsection, we present the algorithm and the main lemmas proved in the analysis. We first introduce the notation used throughout the remainder of this section.

Recall that $\mathcal{P}$ denotes the set of all reachable ordered pairs, i.e.,
$$
\mathcal P := \{(u,v)\in V\times V \mid u\ne v \text{ and } u\sim v\}
$$
For any $S\subseteq V$, let $P(S)$ be the set of pairs in $\mathcal{P}$ whose shortest paths are hit by $S$: 
$$
P(S):=\{(s,t)\in \mathcal{P}\mid \SPV(s,t)\cap S\neq\emptyset\}.
$$
Thus, for any pair family $P\subseteq \mathcal{P}$, the \emph{residual pairs} not hit by $S$ are $P\setminus P(S)$. 

For any $P\subseteq \mathcal{P}$ and $v\in V$, we denote by $\deg_P(v)$ the number of pairs in $P$ whose first endpoint is $v$, called the \emph{degree} of $v$ with respect to $P$:
$$
\deg_P(v):=\left|\{u\in V\mid (v,u)\in P\}\right|.
$$
In particular, for any $S\subseteq V$, define the \emph{residual degree} of $v$ with respect to $S$ by
$$
\deg_S(v):=\deg_{\mathcal{P}\setminus P(S)}(v).
$$ 

For any $P\subseteq V\times V$ and $e\in E$, let $\bc_P(e)$
denote the \emph{betweenness centrality} of an edge $e$, i.e., the number of shortest paths induced by pairs in $P$ that traverse $e$:
$$
\bc_P(e):=\left|\{(s,t)\in P\mid e\in \SPE(s,t)\}\right|.
$$
Similarly, we also define the betweenness centrality of a vertex $v$ with respect to shortest paths of pairs in $P$:
$$
\bc_P(v):=\left|\{(s,t)\in P\mid v\in \SPV(s,t)\}\right|.
$$
For any $S\subseteq V$, we write
$$
\bc_S(e):=\bc_{\mathcal{P}\setminus P(S)}(e)
\qquad\text{and}\qquad
\bc_S(v):=\bc_{\mathcal{P}\setminus P(S)}(v)
$$
for the \emph{residual betweenness centrality} of an edge $e$ and a vertex $v$ with respect to $S$.

Finally, let
$$
\bc(P):=\max_{e\in E}\bc_P(e)
$$
denote the maximum \emph{edge} betweenness centrality with respect to $P$. For any $S\subseteq V$, we also write
$$
\bc(S):=\bc(\mathcal{P}\setminus P(S))
$$
for the maximum residual edge betweenness centrality with respect to $S$. Clearly, $\bc(P)$ is at most the maximum \emph{vertex} betweenness centrality with respect to $P$.

Although the graph is undirected, we work with ordered pairs for notational convenience, as the ordering will be useful when specifying which vertices to query.

Throughout the algorithms below, the graph $G=(V,E)$ and the parameters $\alpha$ and $k$ are fixed globally. For notational simplicity, we do not list them explicitly as inputs to the subroutines.

\subsubsection{Finding Vertices of High Residual Betweenness Centrality}

We begin with the subroutine in \Cref{alg:high-bc}. Given a set $P$ of ordered pairs and an integer $I$, the goal is to find a set $S \subseteq V$ with $|S| \le I$ such that the maximum residual edge betweenness centrality is not greater than $|P|/I$. Specifically, we use a greedy process: we initialize $S$ to the empty set, and at each iteration we add to $S$ the vertex $v$ with the highest betweenness centrality among shortest paths that avoid existing vertices in $S$. This procedure stops when we have $I$ vertices in $S$. We also return the value $B$, defined as the maximum residual edge betweenness centrality with respect to $P \setminus P(S)$.

\begin{algorithm}[H]
\caption{\textsc{High-Residual-BC}$(P,I)$: selecting vertices of high residual betweenness centrality.}
\label{alg:high-bc}
\begin{algorithmic}[1]
\REQUIRE A set $P\subseteq \mathcal P$ of ordered pairs; a positive integer $I$.
\ENSURE A set $S\subseteq V$ with $|S|\le \min\{I,n\}$ and a value $B$. 
\STATE $S\gets\emptyset$
\STATE $P_{cur}\gets P$ 
\FOR{$i=1$ to $\min\{I,n\}$}
    \FOR{each vertex $v\in V\setminus S$}
        \STATE Compute
        $$
        \bc_{P_{cur}}(v)
        =
        \left|
        \left\{
        (s,t)\in P_{cur}\mid v\in \SPV(s,t)
        \right\}
        \right|.
        $$
    \ENDFOR
    \STATE Let $v_i\in \arg\max_{v\in V\setminus S}\bc_{P_{cur}}(v)$ \label{line:HBC6}
    \STATE $S\gets S\cup\{v_i\}$
    \STATE $P_{cur}\gets P_{cur}\setminus P(\{v_i\})$ 
\ENDFOR
\STATE $B\gets \bc(P_{cur})$
\STATE \textbf{return} $(S,B)$
\end{algorithmic}
\end{algorithm}

We restate the main lemma of \Cref{alg:high-bc} for reference from the analysis.
\begin{restatable}{lemma} {lemone}
\label{lem:high-bc-threshold}
Let $(S,B)=\textsc{High-Residual-BC}(P,I)$. Then
$$
B=\bc(P\setminus P(S))\le \frac{|P|}{I}.
$$
\end{restatable}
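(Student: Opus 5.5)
We prove this by a greedy counting argument. The key observation is that the greedy choice is monotone: the maximum residual vertex betweenness centrality never increases as pairs are removed, so every chosen vertex was "heavy" at the end as well. Write $P_0=P$ and let $P_i$ denote $P_{cur}$ after the $i$-th iteration, so that $P_i=P_{i-1}\setminus P(\{v_i\})$. Let $T=\min\{I,n\}$, so that $P_T=P\setminus P(S)$ and $B=\bc(P_T)$.

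First dispose of the case $I\ge n$, where $T=n$. After $n$ iterations every vertex has been added to $S$. Every pair in $P$ has its endpoints in $V=S$, so it lies in $P(S)$. Hence $P_T=\emptyset$ and $B=0\le |P|/I$. We may therefore assume $T=I$.

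Now let $\beta$ be the maximum vertex betweenness centrality with respect to $P_I$, taken over vertices $v\notin S$. For $v\in S$, no pair in $P_I$ passes through $v$, since such pairs were removed when $v$ was added; so these vertices have $\bc_{P_I}(v)=0$. Each edge $e=\{x,y\}$ satisfies $\bc_{P_I}(e)\le \bc_{P_I}(x)$, because every path through $e$ passes through $x$. It follows that $B\le\beta$.

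The main step is to lower-bound each greedy gain. Fix an iteration $i\le I$ and the maximizer $w\notin S$ attaining $\beta$. Since $P_I\subseteq P_{i-1}$, we have $\bc_{P_{i-1}}(w)\ge \bc_{P_I}(w)=\beta$. Also $w$ was available at step $i$, since $w\notin S$ and the candidate set $V\setminus S_{i-1}$ contains $V\setminus S$. By the greedy choice in line \ref{line:HBC6},
\[
|P_{i-1}|-|P_i| \;=\; \bc_{P_{i-1}}(v_i)\;\ge\;\beta.
\]
Here the removed pairs are exactly those whose path contains $v_i$. Summing over $i=1,\dots,I$ telescopes to $|P|\ge |P|-|P_I|\ge I\beta$. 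Therefore $B\le\beta\le |P|/I$.

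The only subtle points are the monotonicity of betweenness under removal of pairs and the availability of $w$ at every earlier step; both are immediate from $P_I\subseteq P_{i-1}$ and $S_{i-1}\subseteq S$. If $V\setminus S$ were empty, the first case already applies.
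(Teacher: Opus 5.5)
Your proof is correct and follows essentially the same greedy counting argument as the paper, including the separate treatment of the case where all vertices get selected and the edge-to-endpoint bound $\bc_{P_I}(e)\le \bc_{P_I}(x)$. The only difference is organizational. The paper averages the non-increasing greedy gains $B_i=\bc_{P_{i-1}}(v_i)$ to get $B_I\le |P|/I$ and then bounds the final residual maximum by $B_I$. You instead compare every greedy gain directly with the final residual maximum $\beta$, which avoids the intermediate sequence but is otherwise the same argument.
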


\subsubsection{Querying Subroutine for Low Betweenness Centrality Regime}

Suppose now that we are given a set $T\subseteq V$ and a set $P\subseteq T\times V$ of ordered pairs such that the maximum edge betweenness centrality with respect to $P$ is at most $B$. The querying procedure in \Cref{alg:RWRBC} is designed to handle this case: for each query vertex, with probability $1/2$ we query a vertex sampled uniformly at random from $T$, and otherwise we query a vertex sampled uniformly at random from $V$. 

\begin{algorithm}[H]
\caption{\textsc{RevealWithRestrictedBC}$(T)$}
\label{alg:RWRBC}
\begin{algorithmic}[1]
\REQUIRE A set $T\subseteq V$.
\ENSURE $\alpha k$ vantage points $\{v_1,\dots,v_{\alpha k}\}$.
\FOR{$i=1$ to $\alpha k$}
    \STATE With probability $1/2$, sample $v_i\sim \mathrm{Unif}(T)$.
    \STATE Otherwise, sample $v_i\sim \mathrm{Unif}(V)$.
\ENDFOR
\STATE \textbf{return} $\{v_1,\dots,v_{\alpha k}\}$
\end{algorithmic}
\end{algorithm}

Using the intuition from the technical overview, when all edges have bounded betweenness centrality, random queries from $T$ and $V$ have a good chance to reveal edges that are competitive with the optimal choice. In the analysis, we will prove that \Cref{alg:RWRBC} gives the following lower bound on the expected number of revealed edges. 

\begin{restatable}{lemma} {lemtwo}
\label{lem:rwrbc}
Consider a set of ordered pairs $P\subseteq (T\times V)\cap \mathcal P$ with $\bc(P)\le B$. Let $\{v_1,\dots,v_{\alpha k}\}$ be the query vertices produced by $\textsc{RevealWithRestrictedBC}(T)$. Then
$$
\mathbb{E}\!\left[\left|\bigcup_{i=1}^{\alpha k} R(v_i)\right|\right]
\ge
\Omega\!\left(
\min\left\{
\frac{|P|}{B},
\alpha k\sqrt{\frac{|P|^2}{B|T|n}}
\right\}
\right).
$$
\end{restatable}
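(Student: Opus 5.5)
The plan is to charge each revealed edge to the pairs of $P$ whose bottleneck it is, bound the probability that a single random query hits one of those pairs, and then use concavity to reduce to the worst case in which every group of pairs is as large as $B$ allows.

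\emph{Step 1 (grouping pairs by bottleneck).} For each edge $e$, let $Q_e:=\{(s,t)\in P \mid b(s,t)=e\}$. Every pair of $P\subseteq\mathcal P$ has exactly one bottleneck edge, so the sets $Q_e$ partition $P$ and $\sum_e |Q_e|=|P|$. Every pair in $Q_e$ has $e$ on its shortest path, so $|Q_e|\le \bc_P(e)\le B$. Let $A_e\subseteq T$ be the set of first endpoints of pairs in $Q_e$, and $C_e\subseteq V$ the set of second endpoints. Then $Q_e\subseteq A_e\times C_e$, so $|A_e|\,|C_e|\ge |Q_e|$.

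\emph{Step 2 (any endpoint reveals $e$).} Querying any vertex of $A_e\cup C_e$ reveals $e$. For $s\in A_e$ this is immediate from the definition of $R(s)$. For $t\in C_e$ it holds because the graph is undirected with unique shortest paths: $\SPE(t,s)=\SPE(s,t)$, hence $b(t,s)=b(s,t)=e$.

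\emph{Step 3 (per-query hitting probability).} A single query $v_i$ produced by \textsc{RevealWithRestrictedBC}$(T)$ lands in $A_e\cup C_e$ with probability at least
$$
\frac12\frac{|A_e|}{|T|}+\frac12\frac{|C_e|}{n}\;\ge\;\sqrt{\frac{|A_e||C_e|}{|T|n}}\;\ge\;\sqrt{\frac{|Q_e|}{|T|n}}=:p_e,
$$
where the first inequality is AM--GM. The $\alpha k$ queries are independent, so $e$ is revealed with probability at least $1-(1-p_e)^{\alpha k}\ge 1-e^{-\alpha k p_e}\ge (1-1/e)\min\{1,\alpha k p_e\}$. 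By linearity of expectation,
$$
\mathbb E\Big[\big|\textstyle\bigcup_i R(v_i)\big|\Big]\;\ge\;(1-1/e)\sum_{e:\,Q_e\neq\emptyset} f(|Q_e|),\qquad f(q):=\min\Big\{1,\ \alpha k\sqrt{\tfrac{q}{|T|n}}\Big\}.
$$

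\emph{Step 4 (concavity).} I would minimize $\sum_e f(|Q_e|)$ subject to $\sum_e|Q_e|=|P|$ and $0<|Q_e|\le B$. This is the step that needs care, though it is short. The function $f$ is concave with $f(0)=0$, so $f(q)/q$ is non-increasing. Hence $f(q)\ge q\,f(B)/B$ for every $0<q\le B$. Summing over $e$ gives
$$
\sum_e f(|Q_e|)\;\ge\;\frac{|P|}{B}\,f(B)=\min\Big\{\frac{|P|}{B},\ \alpha k\frac{|P|}{\sqrt{B|T|n}}\Big\}=\min\Big\{\frac{|P|}{B},\ \alpha k\sqrt{\frac{|P|^2}{B|T|n}}\Big\},
$$
which is the claimed bound up to the constant $1-1/e$.

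The only genuinely delicate points are Step 2, which relies on symmetry of bottlenecks in the undirected unique-shortest-path setting, and the concavity reduction in Step 4. Step 4 replaces the informal ``take all $|Q_e|=B$'' argument from the technical overview with a clean inequality.
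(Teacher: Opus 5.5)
Your proof is correct, and it takes a genuinely different route from the paper's, mainly in how the $\alpha k$ queries are combined.

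The paper argues globally. It bounds $|Q_e|\le\min\{a_eb_e,B\}\le\sqrt{a_eb_eB}$ and applies Cauchy--Schwarz over all edges to get $(\sum_e a_e)(\sum_e b_e)\ge |P|^2/B$. From this it derives a single-query expected gain of $|P|/(2\sqrt{B|T|n})$. It then handles multiple queries with a residual-coverage recurrence:
\begin{itemize}
\item it conditions on $v_1,\dots,v_{j-1}$;
\item it re-applies the one-query bound to the pairs $P_{j-1}$ whose bottleneck is still unrevealed;
\item it uses $|P_{j-1}|\ge |P|-BU_{j-1}$;
\item it solves $u_j-u_{j-1}\ge\lambda(M-u_{j-1})$ with $M=|P|/B$ and $\lambda=\tfrac12\sqrt{B/(|T|n)}$.
\end{itemize}

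You instead work edge by edge. You apply AM--GM to each edge separately, then use independence of the queries to bound each edge's coverage probability directly by $(1-1/e)\min\{1,\alpha k p_e\}$. The single observation that $f(q)/q$ is non-increasing, for concave $f$ with $f(0)=0$, then replaces both the Cauchy--Schwarz step and the recurrence.

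Your version is shorter and needs no conditioning. It also avoids the side conditions $B\le|P|$ and $\lambda\le 1/2$, since $p_e\le 1$ holds automatically as a lower bound on a probability. It turns the overview's heuristic ``make every $|Q_e|$ equal to $B$'' into an exact inequality. The paper's version instead isolates a standalone single-query bound and layers a generic greedy-coverage argument on top of it.

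Your Step 2 also makes explicit the symmetry $b(t,s)=b(s,t)$. The paper uses this tacitly when it counts a query landing in $B_e$ as revealing $e$.
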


\subsubsection{Reducing the Maximum Residual Degree or Revealing Many Edges}
With the two modules we have introduced so far, one natural idea is to apply \Cref{alg:high-bc} to add vertices of high betweenness centrality to $S$, so that the remaining pairs $P\setminus P(S)$ have maximum edge betweenness centrality $B$, and then invoke \Cref{alg:RWRBC}. However, as suggested by the technical overview, controlling residual betweenness centrality alone is not enough: the guarantee of \Cref{alg:RWRBC} also depends on how many residual pairs remain and on the number of vertices from which these pairs originate. We therefore also keep track of residual degrees.

To make \Cref{alg:RWRBC} effective, we restrict attention to vertices of large residual degree. More precisely, for a threshold $d$, we let $T$ be the set of vertices whose degree with respect to $P\setminus P(S)$ is at least $d$, and we let $Q$ be the set of residual pairs in $P\setminus P(S)$ whose first endpoint lies in $T$. The subroutine in \Cref{alg:ROR} either drives all residual degrees below $d$, or finds a stage at which $Q$ still contains many pairs while the greedy step has already made the maximum edge betweenness centrality of $Q$ small. In the latter case, the low-betweenness querying subroutine \Cref{alg:RWRBC} can be applied to $T$ and yields a good revealing strategy.

One subtlety is that after applying \Cref{alg:high-bc} to gather more vertices of high betweenness centrality into $S$, the residual pair set $P\setminus P(S)$ changes, thereby altering both the high-degree vertex set $T$ and the pair family $Q$. For this reason, \Cref{alg:ROR} uses an iterative procedure. In each iteration, it forms the current high-degree set $T$ and the corresponding pair family $Q$, applies \Cref{alg:high-bc} to $Q$, and then updates $S$, $T$, and $Q$. If both $T$ and $Q$ remain large after the update, then the instance has not shrunk much, while the new residual betweenness centrality bound is low enough for \Cref{alg:RWRBC} to be effective. Otherwise, at least one of $T$ or $Q$ shrinks by a constant factor, so this can happen only logarithmically many times before no high-degree vertices remain.

\begin{algorithm}[H]
\caption{\textsc{ReduceOrReveal}$(P,d,I)$}
\label{alg:ROR}
\begin{algorithmic}[1]
\REQUIRE A set $P\subseteq \mathcal P$ of ordered pairs; a positive real number $d$; a positive integer $I$. 
\ENSURE A set $S\subseteq V$ with $|S|=O(I\log n)$ and $\alpha k$ vantage points $\{v_1,\dots,v_{\alpha k}\}$. 
\STATE $S\gets\emptyset$
\STATE $T\gets \{u\in V\mid \deg_{P}(u)\ge d\}$
\STATE $Q\gets \{(u,v)\in P\mid u\in T\}$
\WHILE{$T\neq\emptyset$}
    \STATE $(S',B)\gets \textsc{High-Residual-BC}(Q,I)$ \label{line:ROR-HBC}
    \STATE $S\gets S\cup S'$ \label{line:ROR-update-S}
    \STATE $T_{\mathrm{old}}\gets T,\qquad Q_{\mathrm{old}}\gets Q$ \label{line:ROR-old}
    \STATE $T\gets \{u\in V\mid \deg_{P\setminus P(S)}(u)\ge d\}$ \label{line:ROR-update-T}
    \STATE $Q\gets \{(u,v)\in P\setminus P(S)\mid u\in T\}$ \label{line:ROR-update-Q}
    \IF{$|T|>\frac{|T_{\mathrm{old}}|}{2}$ \textbf{and} $|Q|>\frac{|Q_{\mathrm{old}}|}{2}$} \label{line:ROR-large}
        \STATE $\{v_1,\dots,v_{\alpha k}\}\gets \textsc{RevealWithRestrictedBC}(T)$
        \STATE \textbf{return} $(S,\{v_1,\dots,v_{\alpha k}\})$
    \ENDIF
\ENDWHILE
\STATE Sample $v_i\sim \mathrm{Unif}(V)$ independently for all $i=1,\dots,\alpha k$. \COMMENT{dummy vertices} \label{line:ROR15}
\STATE \textbf{return} $(S,\{v_1,\dots,v_{\alpha k}\})$ 
\end{algorithmic}
\end{algorithm}

In the analysis, we will prove the following guarantee for \Cref{alg:ROR}: one of the following two conditions holds. Note that the first condition is satisfied when the algorithm reaches line~\ref{line:ROR15}, so the vertices sampled there are merely dummy vertices.

\begin{restatable}{lemma} {lemthree}
\label{lem:ror}
Let $(S,\{v_1,\dots,v_{\alpha k}\})=\textsc{ReduceOrReveal}(P,d,I)$. Then $|S|=O(I\log n)$, and one of the following two conclusions holds:
\begin{enumerate}
    \item For all $v\in V$, $\deg_{P\setminus P(S)}(v)<d$; or
    \item $\begin{aligned}[t]
    \mathbb{E}\!\left[\left|\bigcup_{i=1}^{\alpha k} R(v_i)\right|\right]
    \ge\Omega\!\left(\min\left\{I,\alpha k\sqrt{\frac{dI}{n}}\right\}\right).
    \end{aligned}$
\end{enumerate}
\end{restatable}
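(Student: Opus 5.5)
We analyze the while loop of \Cref{alg:ROR}. First we bound the number of iterations and hence $|S|$. Every iteration that does not return must have $|T|\le |T_{\mathrm{old}}|/2$ or $|Q|\le |Q_{\mathrm{old}}|/2$. Both $T$ and $Q$ are monotonically non-increasing across iterations: $S$ only grows, so $P\setminus P(S)$ only shrinks, residual degrees only decrease, $T$ shrinks, and $Q$ shrinks with it. Moreover $|T|\le n$ and $|Q|\le n^2$. Whenever $T\ne\emptyset$, every $u\in T$ has degree at least $d>0$, so $|Q|\ge 1$. Hence only $O(\log n)$ non-returning iterations can occur before $T=\emptyset$. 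Each iteration adds at most $I$ vertices (by the guarantee $|S'|\le\min\{I,n\}$ of \Cref{alg:high-bc}), so $|S|=O(I\log n)$.

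Next we split on how the loop exits. If it exits because $T=\emptyset$ (reaching line~\ref{line:ROR15}), then by definition of the updated $T$ every $v$ has $\deg_{P\setminus P(S)}(v)<d$. This is conclusion 1. (If $T$ was empty initially, $S=\emptyset$ and the same holds.)

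Otherwise the algorithm returns in an iteration where $|T|>|T_{\mathrm{old}}|/2$ and $|Q|>|Q_{\mathrm{old}}|/2$, and we aim for conclusion 2 via \Cref{lem:rwrbc}, applied to the final $T$ and the pair set $Q\subseteq (T\times V)\cap\mathcal P$. The key step is bounding $\bc(Q)$. Let $S_{\mathrm{prev}}$ be $S$ before this iteration, so $S=S_{\mathrm{prev}}\cup S'$. Since $Q\subseteq P\setminus P(S)$ and every pair in $Q$ has first endpoint in $T\subseteq T_{\mathrm{old}}$ (by monotonicity of residual degrees), we get $Q\subseteq Q_{\mathrm{old}}\setminus P(S')$. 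By \Cref{lem:high-bc-threshold} and monotonicity of $\bc$ under taking subsets, this gives $\bc(Q)\le B\le |Q_{\mathrm{old}}|/I<2|Q|/I$.

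Plugging into \Cref{lem:rwrbc}:
\begin{itemize}
\item The first term is $|Q|/B\ge |Q_{\mathrm{old}}|/B\cdot\tfrac12\ge I/2$, using $B\le|Q_{\mathrm{old}}|/I$.
\item For the second term, $|Q|^2/(B|T|n)\ge |Q|\cdot I/(2|T|n)$, and $|Q|\ge d|T|$ since each vertex in $T$ has residual degree at least $d$ and contributes all its residual pairs to $Q$. So the expression is at least $dI/(2n)$, and the second term is $\Omega(\alpha k\sqrt{dI/n})$.
\end{itemize}
This yields conclusion 2. If $B=0$, then $Q$ must be empty (every pair's path has an edge), which contradicts $|Q|\ge d|T|>0$; so $B>0$ and the division is fine.

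The main obstacle is the justification of $Q\subseteq Q_{\mathrm{old}}\setminus P(S')$, i.e., confirming that the high-BC set computed on $Q_{\mathrm{old}}$ controls the betweenness of the new $Q$. This rests on the monotonicity $T\subseteq T_{\mathrm{old}}$ and on $Q_{\mathrm{old}}$ being exactly the residual pairs of $T_{\mathrm{old}}$ with respect to $S_{\mathrm{prev}}$. Both follow directly from the update rules on lines~\ref{line:ROR-update-T}--\ref{line:ROR-update-Q}.
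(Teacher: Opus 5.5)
Your proposal is correct and follows essentially the same argument as the paper. You bound $\bc(Q)$ via $Q\subseteq Q_{\mathrm{old}}\setminus P(S')$ and \Cref{lem:high-bc-threshold}, plug $|Q|>|Q_{\mathrm{old}}|/2$ and $|Q|\ge d|T|$ into \Cref{lem:rwrbc}, and count halvings of $|T|$ or $|Q|$ to get $O(\log n)$ iterations. You also make explicit several details the paper leaves implicit: the monotonicity of $T$ and $Q$ across iterations, that $|Q|\ge 1$ whenever $T\neq\emptyset$, and that $B>0$.
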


Note that after calling this subroutine, we can determine which conclusion holds without knowing the capacities, simply by checking whether $\max_{v\in V}\deg_{P\setminus P(S)}(v)<d$.

\subsubsection{Iterative Residual Degree Reduction}
We now iterate the previous subroutine over geometrically decreasing residual degree thresholds. Specifically, we maintain an upper bound $D$ on the maximum residual degree in the current pair family, and define the next threshold $d$ as a function of $D$ and a parameter $\beta$. The parameter $\beta$ will later be chosen to match the target competitive ratio, and it controls the tradeoff between how quickly the residual degree bound decreases and how strong the revealing guarantee is.

After applying \Cref{alg:ROR}, either the maximum residual degree drops below $d$, in which case we update the degree upper bound from $D$ to $d$ and proceed to the next iteration, or the procedure already reveals many edges and we terminate. If $d$ is chosen to be too large, then the reduction from $D$ to $d$ is too mild; if $d$ is too small, then the revealing guarantee becomes too weak. We therefore choose $d$ carefully to balance these two outcomes.

\begin{algorithm}[H]
\caption{\textsc{IterativeReduce}$(\beta,I)$}
\label{alg:IR}
\begin{algorithmic}[1]
\REQUIRE A real number $\beta>0$; a positive integer $I$.
\ENSURE A set $S\subseteq V$ with $|S|=O(I\log^2 n)$ and $\alpha k$ vantage points $\{v_1,\dots,v_{\alpha k}\}$.
\STATE $S\gets\emptyset$
\STATE $D\gets n$ \label{line:IR-init-D}
\STATE $\mathcal{P}\gets \{(u,v)\in V\times V\mid u\neq v\text{ and }u\sim v\}$
\WHILE{$D\ge 1$}
    \STATE $d\gets \dfrac{k^2D^2n}{\beta^2(\alpha k)^2I}$ \label{line:IR-d}
    \STATE $(S',\{v'_1,\dots,v'_{\alpha k}\})\gets \textsc{ReduceOrReveal}(\mathcal P\setminus P(S),d,I)$ \label{line:IR-ROR}
    \STATE $S\gets S\cup S'$ \label{line:IR-update-S}
    \IF{$\max_{v\in V}\deg_{\mathcal P\setminus P(S)}(v)\ge d$} \label{line:IR-check}
        \STATE \textbf{return} $(S,\{v'_1,\dots,v'_{\alpha k}\})$
    \ENDIF
    \STATE $D\gets d$ \label{line:IR-update-D}
\ENDWHILE
\STATE Sample $v_i\sim \mathrm{Unif}(V)$ independently for all $i=1,\dots,\alpha k$. \COMMENT{dummy vertices} \label{line:IR12}
\STATE \textbf{return} $(S,\{v_1,\dots,v_{\alpha k}\})$
\end{algorithmic}
\end{algorithm}



In the analysis, we will prove the following guarantee for \Cref{alg:IR}.

\begin{restatable}{lemma} {lemfour}
\label{lem:ir}
Suppose that
$$
\frac{k^2n^2}{\beta^2(\alpha k)^2I}\le \frac{1}{2}
\qquad\text{and}\qquad
I\ge \frac{kn}{\beta}.
$$
Let $(S,\{v_1,\dots,v_{\alpha k}\})=\textsc{IterativeReduce}(\beta,I)$ and $\hat{d}=\max_{v\in V}\deg_{\mathcal P\setminus P(S)}(v)$. Then $|S|=O(I\log^2 n)$, and 
$$
\mathbb{E}\!\left[\left|\bigcup_{i=1}^{\alpha k} R(v_i)\right|\right]
\ge
\tilde{\Omega}\!\left(\frac{k\hat{d}}{\beta}\right).
$$
\end{restatable}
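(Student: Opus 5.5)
The plan is to track a loop invariant for \Cref{alg:IR}, bound the number of iterations, and then examine the two ways the algorithm can return. Write $c:=\frac{k^2n}{\beta^2(\alpha k)^2I}$, so that line~\ref{line:IR-d} sets $d=cD^2$.

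\textbf{Invariant.} At the start of every iteration of the while loop, $\max_{v}\deg_{\mathcal P\setminus P(S)}(v)\le D$.
\begin{itemize}
\item Initially this holds because $D=n$ and every degree is at most $n-1$.
\item If the loop does not return at line~\ref{line:IR-check}, then every residual degree is $<d$, and $D$ is reset to $d$, so the invariant is preserved.
\item Enlarging $S$ only shrinks $\mathcal P\setminus P(S)$, so residual degrees never increase.
\end{itemize}

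\textbf{Size of $S$.} Since $D\le n$ throughout, the first hypothesis gives
$$d=cD^2\le cnD=\frac{k^2n^2}{\beta^2(\alpha k)^2I}\cdot D\le \frac D2 .$$
Thus $D$ at least halves in each iteration, starting from $n$, and the loop runs $O(\log n)$ times before $D<1$. Each iteration adds $O(I\log n)$ vertices to $S$ by \Cref{lem:ror}. Hence $|S|=O(I\log^2 n)$.

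\textbf{Identifying the input pair family.} In an iteration, let $S_0$ be the current $S$ before line~\ref{line:IR-ROR}. \textsc{ReduceOrReveal} is called on $P=\mathcal P\setminus P(S_0)$ and returns $S'$. Since $P(S_0\cup S')=P(S_0)\cup P(S')$, we have
$$P\setminus P(S')=\mathcal P\setminus P(S),\qquad S=S_0\cup S'.$$
So the degree check at line~\ref{line:IR-check} is exactly a check of conclusion~1 of \Cref{lem:ror} for this call.

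\textbf{Case 1: return at line~\ref{line:IR-check}.} Here conclusion~1 of \Cref{lem:ror} fails, so conclusion~2 holds for the returned vertices:
$$\mathbb{E}\Big[\Big|\bigcup_i R(v_i)\Big|\Big]\ge\Omega\Big(\min\Big\{I,\ \alpha k\sqrt{dI/n}\Big\}\Big).$$
Substituting $d$, the second term becomes
$$\alpha k\sqrt{\frac{dI}{n}}=\alpha k\cdot\frac{kD}{\beta\alpha k}=\frac{kD}{\beta}.$$
By the invariant, $\hat d\le D$, because the final $S$ only grows beyond the $S_0$ for which the invariant held. Therefore $\frac{kD}{\beta}\ge \frac{k\hat d}{\beta}$. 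For the first term, the second hypothesis gives $I\ge \frac{kn}{\beta}\ge \frac{k\hat d}{\beta}$, since $\hat d<n$. Hence the minimum is $\Omega(k\hat d/\beta)$, as required.

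\textbf{Case 2: the loop exits with $D<1$.} By the invariant, every residual degree is less than $1$, so $\hat d=0$ and the bound holds trivially for the dummy vertices of line~\ref{line:IR12}.

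\textbf{Main obstacle.} The only delicate step is the identification above: verifying that conclusion~1 of \Cref{lem:ror}, which is stated for the input family $\mathcal P\setminus P(S_0)$ and the returned $S'$, coincides with the condition tested at line~\ref{line:IR-check} after the update $S\gets S\cup S'$. Once this is in place, the rest is the algebraic substitution of $d$ in Case~1 and the halving argument for the iteration count.
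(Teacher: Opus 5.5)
Your proposal is correct and follows essentially the same route as the paper. It uses the invariant $\max_v \deg_{\mathcal P\setminus P(S)}(v)\le D$ and the halving $d\le D/2$ from the first hypothesis, which give $O(\log n)$ calls and $|S|=O(I\log^2 n)$; in the returning case it uses the substitution $\alpha k\sqrt{dI/n}=kD/\beta$ together with $I\ge kn/\beta$ and $\hat d\le D$. Your explicit check that $(\mathcal P\setminus P(S_0))\setminus P(S')=\mathcal P\setminus P(S_0\cup S')$ spells out a step the paper uses implicitly when it matches the test in line~\ref{line:IR-check} to Conclusion~1 of \Cref{lem:ror}.
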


\subsubsection{The Final Algorithm}

We now combine the three cases handled by the previous components. The first case is when many edges in the optimum can already be revealed by vertices in the hitting set $S$. The second case is when the remaining residual instance is handled by \Cref{lem:ir}. The third case is when sampling uniformly from all of $V$ is already sufficient because the query budget $\alpha k$ is large.

Before stating the algorithm, we choose a parameter $\beta$ that will later correspond to the competitive ratio. In the analysis, we will verify that this choice of $\beta$ satisfies the conditions required by \Cref{lem:ir}.

\begin{algorithm}[H]
\caption{A randomized non-adaptive
$\tilde{O}\!\left(\min\left(\dfrac{n}{\alpha k},\max\left(\sqrt{\dfrac{n}{\alpha}},\dfrac{n^{2/3}}{\alpha k^{1/3}}\right)\right)\right)$-competitive algorithm.}
\label{alg:final}
\begin{algorithmic}[1]
\REQUIRE The globally fixed graph $G=(V,E)$ with unique shortest paths and globally fixed parameters $\alpha,k$ with $\alpha k\le n$.
\ENSURE $\alpha k$ vantage points $\{v_1,\dots,v_{\alpha k}\}$. 
\STATE Choose $\beta$ of order
$$
\tilde{\Theta}\!\left(
\max\left\{
\sqrt{\frac{n}{\alpha}},
\frac{n^{2/3}}{\alpha k^{1/3}}
\right\}
\right)
$$
large enough that, after setting $I\gets \lceil \alpha k\beta\rceil$, we have
$$
\frac{k^2n^2}{\beta^2(\alpha k)^2I}\le \frac{1}{2}
\qquad\text{and}\qquad
I\ge \frac{kn}{\beta}.
$$
\STATE $(S,\{v'_1,\dots,v'_{\alpha k}\})\gets \textsc{IterativeReduce}(\beta,I)$
\STATE Draw $X\in\{1,2,3\}$ uniformly at random.
\IF{$X=1$}\label{line:final-branch1}
    \IF{$S\neq\emptyset$}
        \STATE Sample $v_i\sim \mathrm{Unif}(S)$ independently for all $i=1,\dots,\alpha k$.
    \ELSE
        \STATE Sample $v_i\sim \mathrm{Unif}(V)$ independently for all $i=1,\dots,\alpha k$.  
    \ENDIF
\ELSIF{$X=2$}\label{line:final-branch2}
    \STATE Set $v_i\gets v'_i$ for all $i=1,\dots,\alpha k$.
\ELSE\label{line:final-branch3}
    \STATE Sample $v_i\sim \mathrm{Unif}(V)$ independently for all $i=1,\dots,\alpha k$.
\ENDIF
\STATE \textbf{return} $\{v_1,\dots,v_{\alpha k}\}$
\end{algorithmic}
\end{algorithm}

We will show that \Cref{alg:final} achieves the competitive ratio claimed in \Cref{thm:main}.

\subsection{Competitive Ratio Analysis}


\subsubsection{Proof of \Cref{lem:high-bc-threshold}}
\lemone*

\begin{proof}
If $I>n$, then \Cref{alg:high-bc} selects all vertices, so $S=V$ and hence $P\setminus P(S)=\emptyset$. Therefore, $B=0\le |P|/I$, and the claim is trivially true. We may thus assume that $I\le n$.

For $i=0,1,\dots,I$, let $S_i$ be the set maintained by \textsc{High-Residual-BC} after $i$ iterations. Thus $S_0=\emptyset$, $S_i=S_{i-1}\cup \{v_i\}$ for $i\ge 1$, and $S_I=S$. 
Also let
$$
P_i:=P\setminus P(S_i),
$$
and for $i=1,\dots,I$ define
$$
B_i:=\max_{v\in V\setminus S_{i-1}}\bc_{P_{i-1}}(v)
=\bc_{P_{i-1}}(v_i).
$$

Since $P_i\subseteq P_{i-1}$ for every $i$, the sequence $B_1,B_2,\dots,B_I$ is non-increasing. Moreover, when we add $v_i$ to $S$, the pairs in $P_{i-1}$ whose shortest path contains $v_i$ do not appear in $P_i$. Hence
$$
|P_{i-1}\setminus P_i|\ge B_i.
$$
Summing over all iterations yields
$$
\sum_{i=1}^I B_i
\le
\sum_{i=1}^I |P_{i-1}\setminus P_i|
\le
|P|.
$$
Because the sequence is non-increasing, we obtain
$$
B_I\le \frac{1}{I}\sum_{i=1}^I B_i\le\frac{|P|}{I}.
$$

Now take an edge $e \in E$ on the shortest path of a pair $(s,t)\in P_I$. Let $x$ be an endpoint of $e$. Then $x$ also lies on $\SPV(s,t)$. Since $(s,t)\in P_I$, we have $\SPV(s,t)\cap S_I=\emptyset$, and therefore $x\notin S_I$. It follows that
$$
\bc_{P_I}(e)\le \bc_{P_I}(x)\le \max_{v\in V\setminus S_I}\bc_{P_I}(v)\le B_I.
$$
Taking the maximum over all edges gives
$$
B
=
\bc(P\setminus P(S))
=
\max_{e\in E}\bc_{P_I}(e)
\le
B_I
\le
\frac{|P|}{I}.
$$
\end{proof}

\subsubsection{Proof of \Cref{lem:rwrbc}}

\lemtwo*

\begin{proof}
Since $\bc(P)\le |P|$, we may assume without loss of generality that $B\le |P|$.
Replacing $B$ by $\min\{B,|P|\}$ only strengthens the claimed lower bound.

Now consider the fixed edge capacities. 
For each edge $e\in E$, let $Q_e$ be the set of pairs in $P$ whose bottleneck edge is $e$:
$$
Q_e:=\{(s,t)\in P\mid b(s,t)=e\}.
$$
Since every pair in $Q_e$ has $e$ on its shortest path and $e$ lies on at most $\bc_P(e)\le B$ shortest paths induced by pairs in $P$, we have
$$
|Q_e|\le B.
$$

Define
$$
A_e:=\{s\in T\mid \exists t\in V\text{ such that }(s,t)\in Q_e\},
\qquad
a_e:=|A_e|,
$$
$$
B_e:=\{t\in V\mid \exists s\in T\text{ such that }(s,t)\in Q_e\},
\qquad
b_e:=|B_e|.
$$
Clearly, $Q_e\subseteq A_e\times B_e$, and we have $|Q_e|\le a_eb_e$. Combining this with $|Q_e|\le B$, we get
$$
|Q_e|
\le
\min\{a_eb_e,B\}
\le
\sqrt{a_eb_e\cdot B}.
$$
Therefore,
$$
|P|
=
\sum_e |Q_e|
\le
\sqrt{B}\sum_e \sqrt{a_eb_e}
\le
\sqrt{B\left(\sum_e a_e\right)\left(\sum_e b_e\right)},
$$
where the last step applies the Cauchy--Schwarz inequality.
Rearranging gives
$$
\left(\sum_e a_e\right)\left(\sum_e b_e\right)\ge \frac{|P|^2}{B}.
$$
Define $R:=\frac{|P|}{\sqrt{B|T|n}}$. Hence either
$$
\sum_e a_e\ge |T|R
\qquad\text{or}\qquad
\sum_e b_e\ge nR.
$$

Now consider one query $v$ generated by \Cref{alg:RWRBC}. For a fixed edge $e$, the edge $e$ is revealed by a query at $v$ if the query $v$ either lands in $A_e$ when sampling uniformly from $T$ or lands in $B_e$ when sampling uniformly from $V$. Thus
$$
\Pr[e\in R(v)]
\ge
\frac{a_e}{2|T|}+\frac{b_e}{2n}.
$$
Summing over all edges and applying the inequalities above, we have   
$$
\sum_e \Pr[e\in R(v)]
\ge
\frac{1}{2|T|}\sum_e a_e+\frac{1}{2n}\sum_e b_e
\ge
\frac{R}{2}
=
\frac{|P|}{2\sqrt{B|T|n}}.
$$

The analysis so far shows that one query node using \Cref{alg:RWRBC} reveals a good number of edges in expectation. Next, we consider multiple query nodes. Specifically, we
apply the same idea not only to the original pair set $P$, but also to the residual pairs whose bottleneck edges have not yet been revealed.

For $j=0,1,\dots,\alpha k$, let $F_j$ be the set of edges revealed so far by query vertices $\{v_1,\dots,v_j\}$. 
$$
F_0=\emptyset, \qquad F_j:=\bigcup_{i=1}^j R(v_i),
\qquad
U_j:=|F_j|.
$$
Define $P_j$ as the set of residual pairs after removing the pairs whose bottleneck edge has been revealed in $F_j$:
$$
P_j:=\{(s,t)\in P\mid b(s,t)\notin F_j\}.
$$
Conditioned on the first $j-1$ query vertices, we can apply the same one-query analysis to $P_{j-1}$ and obtain
$$
\mathbb{E}[U_j-U_{j-1}\mid v_1,\ldots,v_{j-1}]
\ge
\frac{|P_{j-1}|}{2\sqrt{B|T|n}}.
$$ 
Next, since each revealed edge can be the bottleneck of at most $B$ pairs from $P$, we have
$$
|P_{j-1}|
\ge
|P|-BU_{j-1}.
$$
Therefore
$$
\mathbb{E}[U_j-U_{j-1}\mid v_1,\ldots,v_{j-1}]
\ge
\frac{|P|-BU_{j-1}}{2\sqrt{B|T|n}}.
$$
Taking total expectation and writing $u_j:=\mathbb{E}[U_j]$, we obtain
$$
u_j-u_{j-1}
\ge
\frac{|P|-Bu_{j-1}}{2\sqrt{B|T|n}}.
$$

Let
$$
M:=\frac{|P|}{B},
\qquad
\lambda:=\frac{1}{2}\sqrt{\frac{B}{|T|n}}.
$$
Then the recurrence becomes
$$
u_j-u_{j-1}\ge \lambda(M-u_{j-1}),
$$
or equivalently,
$$
M-u_j\le (1-\lambda)(M-u_{j-1}).
$$
Iterating yields
$$
M-u_{\alpha k}\le (1-\lambda)^{\alpha k} M,
$$
and hence
$$
u_{\alpha k}\ge M\bigl(1-(1-\lambda)^{\alpha k}\bigr).
$$

Since $B\le |P|\le |T|n$, we have $\lambda\in(0,1/2]$. Therefore
$$
1-(1-\lambda)^{\alpha k}\ge \Omega(\min\{1,\alpha k\lambda\}).
$$
Substituting the definitions of $M$ and $\lambda$ gives
$$
u_{\alpha k}
\ge
\Omega\!\left(
\min\left\{
\frac{|P|}{B},
\alpha k\cdot \frac{|P|}{\sqrt{B|T|n}}
\right\}
\right)
=
\Omega\!\left(
\min\left\{
\frac{|P|}{B},
\alpha k\sqrt{\frac{|P|^2}{B|T|n}}
\right\}
\right).
$$
Since $u_{\alpha k}=\mathbb{E}\!\left[\left|\bigcup_{i=1}^{\alpha k} R(v_i)\right|\right]$, this proves the lemma.
\end{proof}

\subsubsection{Proof of \Cref{lem:ror}}

\lemthree*

\begin{proof}
Consider one iteration of \Cref{alg:ROR}. Let $T_{\mathrm{old}}$ and $Q_{\mathrm{old}}$ be the values stored in line~\ref{line:ROR-old}, and let $(S',B)$ be the output of $\textsc{High-Residual-BC}(Q_{\mathrm{old}},I)$. By \Cref{lem:high-bc-threshold},
$$
B\le \frac{|Q_{\mathrm{old}}|}{I}.
$$

After updating $S$ in line~\ref{line:ROR-update-S}, the algorithm recomputes $T$ and $Q$ in lines~\ref{line:ROR-update-T} and~\ref{line:ROR-update-Q}. Since residual degrees can only decrease when $S$ grows, the new set $T$ is contained in $T_{\mathrm{old}}$. Moreover, by the definition of $Q$ in line~\ref{line:ROR-update-Q}, every pair in the new $Q$ is a residual pair after deleting the pairs hit by $S'$. Hence
$$
Q\subseteq Q_{\mathrm{old}}\setminus P(S').
$$
Therefore, every edge lies on at most $B$ shortest paths induced by pairs in $Q$.

\Cref{alg:ROR} has two possible exits. If we run out of high-degree vertices, then $T=\emptyset$, which means
$$
\deg_{P\setminus P(S)}(v)<d
\qquad\text{for all }v\in V.
$$
Then the first conclusion is true. 

Otherwise, the algorithm returns through the if-branch in line~\ref{line:ROR-large}. Thus, in that iteration,
$|T|>|T_{\mathrm{old}}|/2$ and $|Q|>|Q_{\mathrm{old}}|/2$, and the query vertices are produced by the call to \Cref{alg:RWRBC} immediately following this line. Applying \Cref{lem:rwrbc} to the pair family $Q$ with betweenness centrality at most $B$ yields
$$
\mathbb{E}\!\left[\left|\bigcup_{i=1}^{\alpha k} R(v_i)\right|\right]
\ge
\Omega\!\left(
\min\left\{
\frac{|Q|}{B},
\alpha k\sqrt{\frac{|Q|^2}{B|T|n}}
\right\}
\right).
$$
In particular, the first term satisfies
$$
\frac{|Q|}{B} > \frac{|Q_{\mathrm{old}}|}{2} \cdot \frac{I}{|Q_{\mathrm{old}}|}
=
\frac{I}{2}.
$$
Moreover, every $u\in T$ has $\deg_{P\setminus P(S)}(u)\ge d$, and $Q$ contains exactly the pairs in $P\setminus P(S)$ whose first endpoint lies in $T$. Hence
$$
|Q|\ge d|T|.
$$
Therefore, the second term can be simplified as
$$
\alpha k\sqrt{\frac{|Q|^2}{B|T|n}}
\ge
\alpha k\sqrt{\frac{|Q|^2I}{|Q_{\mathrm{old}}||T|n}}
>
\alpha k\sqrt{\frac{|Q|I}{2|T|n}}
\ge
\frac{\alpha k}{\sqrt{2}}\sqrt{\frac{dI}{n}}.
$$
This proves Conclusion~2.

Next, we bound the number of iterations. Whenever an iteration does not return, the condition in line~\ref{line:ROR-large} is false. Therefore at least one of $|T|$ or $|Q|$ drops by a factor of $2$. Since initially $|T|\le n$ and $|Q|\le n^2$, this can happen only $O(\log n)$ times. In each iteration, we add at most $I$ new vertices. Thus, $|S|=O(I\log n)$.
\end{proof}

\subsubsection{Proof of \Cref{lem:ir}}

\lemfour*

\begin{proof}
For each iteration index $j\ge 0$, let $S_j$ be the cumulative set of vertices added to $S$ so far by the \textsc{ReduceOrReveal} step. We maintain an upper bound $D_j$ on
$$
\max_{v\in V}\deg_{\mathcal P\setminus P(S_j)}(v),
$$
starting from $S_0=\emptyset$ and $D_0=n$.

At the $j$-th iteration, following lines~\ref{line:IR-d} and~\ref{line:IR-ROR}, define
$$
d_j:=\frac{k^2D_j^2n}{\beta^2(\alpha k)^2I},
$$
and call
$$
\textsc{ReduceOrReveal}(\mathcal P\setminus P(S_j),d_j,I).
$$

After the update $S\gets S\cup S'_j$ in line~\ref{line:IR-update-S}, if the test in line~\ref{line:IR-check} is true, i.e.,
$$
\max_{v\in V}\deg_{\mathcal P\setminus P(S_j\cup S'_j)}(v)\ge d_j,
$$
then Conclusion~1 of \Cref{lem:ror} fails, and hence Conclusion~2 must hold. Therefore the returned tuple satisfies
$$
\mathbb{E}\!\left[\left|\bigcup_{i=1}^{\alpha k} R(v_i)\right|\right]
\ge
\Omega\!\left(
\min\left\{
I,
\alpha k\sqrt{\frac{d_jI}{n}}
\right\}
\right).
$$
By the choice of $d_j$,
$$
\alpha k\sqrt{\frac{d_jI}{n}}
=
\alpha k\sqrt{
\frac{k^2D_j^2n}{\beta^2(\alpha k)^2I}\cdot \frac{I}{n}
}
=
\frac{kD_j}{\beta}.
$$
Since $D_j\le n$ and $I\ge kn/\beta$ (from the assumption), we have
$$
I\ge \frac{kD_j}{\beta}.
$$
Thus, the lower bound simplifies to
$$
\Omega\!\left(\frac{kD_j}{\beta}\right).
$$
Residual degrees can only decrease as $S$ grows, so if $S$ denotes the final set returned by the algorithm, then
$$
\hat{d}=\max_{v\in V}\deg_{\mathcal P\setminus P(S)}(v)\le D_j.
$$
Hence, the returned tuple reveals
$$
\Omega\!\left(\frac{k\hat{d}}{\beta}\right)
$$
edges in expectation, as desired.

Otherwise, by Conclusion~1 of \Cref{lem:ror}, after augmenting $S_j$ by the returned set $S'_j$, every residual degree is below $d_j$. In this case, following lines~\ref{line:IR-update-S} and~\ref{line:IR-update-D}, set
$$
S_{j+1}=S_j\cup S'_j
\qquad\text{and}\qquad
D_{j+1}=d_j.
$$
Then we have
$$
\max_{v\in V}\deg_{\mathcal P\setminus P(S_{j+1})}(v)<D_{j+1}.
$$
So the invariant
$$
\max_{v\in V}\deg_{\mathcal P\setminus P(S_j)}(v)\le D_j
$$
is preserved.

Finally, by the first hypothesis and the fact that $D_j\le n$, we have
$$
D_{j+1}
=
d_j
=
D_j\cdot \frac{k^2D_jn}{\beta^2(\alpha k)^2I}
\le
D_j\cdot \frac{k^2n^2}{\beta^2(\alpha k)^2I}
\le
\frac{D_j}{2}.
$$
Hence the sequence $D_j$ decreases geometrically. After $O(\log n)$ iterations we must have $D_j<1$, and since all residual degrees are integers this implies
$$
\max_{v\in V}\deg_{\mathcal P\setminus P(S_j)}(v)=0.
$$
In that case, the desired lower bound is trivial because $S=S_j$ and $\hat{d}=0$.

Moreover, by \Cref{lem:ror}, each call to \textsc{ReduceOrReveal} adds $O(I\log n)$ vertices to $S$. Since the number of calls is $O(\log n)$, the final set satisfies
$$
|S|=O(I\log^2 n).
$$
Thus, the claim is true in all cases.
\end{proof}

\subsubsection{Final Analysis}

Recall the bound we aim to prove: 

$$
\frac{\OPT_k}{\mathbb{E}(\ALG_{\alpha k})}=
\tilde{O}\!\left(
\min\left\{
\frac{n}{\alpha k},
\max\left(
\sqrt{\frac{n}{\alpha}},
\frac{n^{2/3}}{\alpha k^{1/3}}
\right)
\right\}
\right).
$$

\noindent We first state a simple sampling lemma.

\begin{lemma}
\label{lem:witness}
Let $U\subseteq V$ and let $F\subseteq E$. Suppose that for every $e\in F$, there exists a vertex $u_e\in U$ such that $e\in R(u_e)$. If $\{v_1,\dots,v_{\alpha k}\}$ are sampled uniformly at random from $U$, then
$$
\mathbb{E}\!\left[
\left|
\left(\bigcup_{i=1}^{\alpha k} R(v_i)\right)\cap F
\right|
\right]
\ge
\Omega\!\left(
\min\left\{
1,\frac{\alpha k}{|U|}
\right\}
|F|
\right).
$$
\end{lemma}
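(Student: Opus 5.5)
The plan is to use linearity of expectation over the edges of $F$, together with a per-edge lower bound on the probability of being revealed. For each $e\in F$, fix a witness $u_e\in U$ with $e\in R(u_e)$. Whenever some sampled query $v_i$ equals $u_e$, the edge $e$ lies in $\bigcup_i R(v_i)$. Hence
$$
\Pr\!\left[e\in \bigcup_{i=1}^{\alpha k} R(v_i)\right]\ \ge\ \Pr\!\left[\exists i:\ v_i=u_e\right]\ =\ 1-\left(1-\frac{1}{|U|}\right)^{\alpha k},
$$
where the equality uses that the $\alpha k$ samples are independent and uniform over $U$. This is the same interpretation of ``sampled uniformly'' used in the final algorithm. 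Summing over $e\in F$ gives
$$
\mathbb{E}\!\left[\left|\left(\bigcup_i R(v_i)\right)\cap F\right|\right]\ \ge\ |F|\left(1-\left(1-\frac{1}{|U|}\right)^{\alpha k}\right).
$$

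It remains to show the elementary inequality $1-(1-x)^m\ge \Omega(\min\{1,mx\})$ for $x=1/|U|\in(0,1]$ and $m=\alpha k$. This is the same estimate already used at the end of the proof of \Cref{lem:rwrbc}. To see it, write $(1-x)^m\le e^{-mx}$. Then apply $1-e^{-y}\ge (1-e^{-1})\min\{1,y\}$, which holds for $y\ge 0$: for $y\le 1$ it follows from concavity of $1-e^{-y}$, and for $y\ge1$ from monotonicity. Taking $y=\alpha k/|U|$ gives the claimed $\Omega(\min\{1,\alpha k/|U|\}\,|F|)$.

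No step here is a real obstacle. The only points needing care are that the witness $u_e$ may be shared by many edges, which is harmless because linearity of expectation does not need independence across edges, and the degenerate case $U=\emptyset$. That case only arises when $F=\emptyset$, so the bound holds trivially.
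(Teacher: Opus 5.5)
Your proposal is correct and follows the paper's proof essentially verbatim: fix a witness $u_e$ for each $e\in F$, lower-bound the probability that some independent uniform sample hits $u_e$ by $1-(1-1/|U|)^{\alpha k}=\Omega(\min\{1,\alpha k/|U|\})$, and sum over $F$ by linearity of expectation. Your added justification of the elementary inequality via $(1-x)^m\le e^{-mx}$ and concavity of $1-e^{-y}$, along with the remark on $U=\emptyset$, fills in details the paper leaves implicit.
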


\begin{proof}
For each edge $e\in F$, fix a witness $u_e\in U$ such that $e\in R(u_e)$. Let $X_e$ be the indicator of the event that $e$ is revealed by at least one of the $\alpha k$ samples. If one of the samples equals $u_e$, then $e$ is certainly revealed. Therefore
$$
\Pr[X_e=1]
\ge
1-\left(1-\frac{1}{|U|}\right)^{\alpha k}
=
\Omega\!\left(
\min\left\{
1,\frac{\alpha k}{|U|}
\right\}
\right).
$$
By linearity of expectation,
$$
\mathbb{E}\!\left[
\left|
\left(\bigcup_{i=1}^{\alpha k} R(v_i)\right)\cap F
\right|
\right]
=
\sum_{e\in F}\Pr[X_e=1]
\ge
\Omega\!\left(
\min\left\{
1,\frac{\alpha k}{|U|}
\right\}
|F|
\right).
$$
\end{proof}


\noindent Now we are ready to proceed with the proof of the final bound. Let $\{x_1,\dots,x_k\}$ be an optimal choice of $k$ vantage points, and define
$$
F^\star:=\bigcup_{i=1}^k R(x_i).
$$
Then $|F^\star|=\OPT_k$. Take an edge $e\in F^\star$. Since $e\in F^\star$, there exist an index $i(e)\in[k]$ and a vertex $t(e)$ with $(x_{i(e)},t(e))\in\mathcal P$ such that $e$ is the bottleneck edge on the shortest path from $x_{i(e)}$ to $t(e)$, that is,
$$
e=b(x_{i(e)},t(e)).
$$
Now define $F_{\mathrm{hit}}$ as the set of edges in $F^\star$ that would also be revealed by vertices in $S$, i.e., 
$$
F_{\mathrm{hit}}
:=
\{e\in F^\star\mid \SPV(x_{i(e)},t(e))\cap S\neq \emptyset\},
$$
and let $F_{\mathrm{res}}$ be the remaining edges:
$$
F_{\mathrm{res}}
:=
F^\star\setminus F_{\mathrm{hit}}.
$$

We first verify the assumptions of \Cref{lem:ir} for the parameter choice of $I$ and $\beta$ in \Cref{alg:final}. Recall that
$$
I=\lceil \alpha k\beta\rceil=\Theta(\alpha k\beta),
$$
and
$$
\beta
=
\tilde{\Theta}\!\left(
\max\left\{
\sqrt{\frac{n}{\alpha}},
\frac{n^{2/3}}{\alpha k^{1/3}}
\right\}
\right).
$$
By choosing the hidden constant in the definition of $\beta$ sufficiently large, we have
$$
I\ge \alpha k\beta \ge \frac{kn}{\beta}
$$
because $\beta\ge \tilde{\Omega}(\sqrt{n/\alpha})$, and
$$
\frac{k^2n^2}{\beta^2(\alpha k)^2I}
\le
\frac{n^2}{\alpha^3\beta^3k}
\le
\frac12
$$
because $\beta\ge \tilde{\Omega}(n^{2/3}/(\alpha k^{1/3}))$. Therefore \Cref{lem:ir} yields
$$
\mathbb{E}\!\left[\left|\bigcup_{i=1}^{\alpha k} R(v'_i)\right|\right]
\ge
\tilde{\Omega}\!\left(
\frac{k\max_{v\in V}\deg_{\mathcal{P}\setminus P(S)}(v)}{\beta}
\right).
$$

We now analyze the three branches of \Cref{alg:final}: Branch 1 is the case $X=1$ in line~\ref{line:final-branch1}, Branch 2 is the case $X=2$ in line~\ref{line:final-branch2}, and Branch 3 is the case $X=3$ in line~\ref{line:final-branch3}.

\paragraph{Branch 1: sampling from $S$.}
Let $e\in F_{\mathrm{hit}}$, and let
$$
(s,t):=(x_{i(e)},t(e)).
$$
By definition, $\SPV(s,t)\cap S\neq\emptyset$. Pick any vertex $u\in \SPV(s,t)\cap S.$ Since $u\in \SPV(s,t)$, the unique shortest path from $s$ to $t$ is the concatenation of the two shortest subpaths from $s$ to $u$ and from $u$ to $t$, so the edge $e$ lies on at least one of the two edge sets $\SPE(s,u)$ and $\SPE(u,t)$. Suppose first that $e\in \SPE(s,u)$. Since $e$ is the unique minimum-capacity edge among the edges in $\SPE(s,t)$, it is also the unique minimum-capacity edge among the edges in $\SPE(s,u)$. Hence
$$
e=b(s,u),
$$
and therefore $e\in R(u)$. The case $e\in \SPE(u,t)$ is similar and gives $e=b(u,t)$. Thus every edge $e\in F_{\mathrm{hit}}$ can be revealed by a vertex in $S$, called its witness.
Since $|S|=O(I\log^2 n)=O(\alpha k\beta\log^2 n)$, \Cref{lem:witness} with $U=S$ gives the following bound. Recall that
$
\ALG_{\alpha k}:=\left|\bigcup_{i=1}^{\alpha k} R(v_i)\right|.
$
Then
$$
\mathbb{E}\!\left[\ALG_{\alpha k}\mid\text{Branch 1}\right]
\ge
\tilde{\Omega}\!\left(\frac{|F_{\mathrm{hit}}|}{\beta}\right).
$$

\paragraph{Branch 2: using the vantage points returned by \Cref{alg:IR}.}
If $e\in F_{\mathrm{res}}$, then by definition
$$
(x_{i(e)},t(e))\in \mathcal{P}\setminus P(S).
$$
For each fixed $i\in [k]$, the number of residual edges revealed by $x_i$ is at most $\deg_{\mathcal{P}\setminus P(S)}(x_i)$. Consequently,
$$
|F_{\mathrm{res}}|
\le
\sum_{i=1}^k \deg_{\mathcal{P}\setminus P(S)}(x_i)
\le
k\max_{v\in V}\deg_{\mathcal{P}\setminus P(S)}(v).
$$
Using the guarantee from \Cref{lem:ir}, we obtain
$$
\mathbb{E}\!\left[\ALG_{\alpha k}\mid \text{Branch 2}\right]
\ge
\tilde{\Omega}\!\left(
\frac{k\max_{v\in V}\deg_{\mathcal{P}\setminus P(S)}(v)}{\beta}
\right)
\ge
\tilde{\Omega}\!\left(
\frac{|F_{\mathrm{res}}|}{\beta}
\right).
$$

\paragraph{Branch 3: sampling from $V$.}
Every edge $e$ in $F^\star$ can be revealed by a query at one of its witnesses
$x_i$. Hence \Cref{lem:witness} with $U=V$ yields
$$
\mathbb{E}\!\left[\ALG_{\alpha k}\mid \text{Branch 3}\right]
\ge
\Omega\!\left(
\min\left\{
1,\frac{\alpha k}{n}
\right\}
|F^\star|
\right)
=
\Omega\!\left(
\frac{\alpha k}{n}\OPT_k
\right),
$$
where the last equality uses $\alpha k\le n$.

Since each branch is chosen with probability $1/3$, combining the three estimates gives
$$
\mathbb{E}\!\left[\ALG_{\alpha k}\right]
\ge
\tilde{\Omega}\!\left(
\frac{|F_{\mathrm{hit}}|+|F_{\mathrm{res}}|}{\beta}
\right)
+
\Omega\!\left(
\frac{\alpha k}{n}\OPT_k
\right).
$$
Because $F_{\mathrm{hit}}\cup F_{\mathrm{res}}=F^\star$ and $|F^\star|=\OPT_k$, this simplifies to
$$
\mathbb{E}\!\left[\ALG_{\alpha k}\right]
\ge
\tilde{\Omega}\!\left(
\max\left\{
\frac{\OPT_k}{\beta},
\frac{\alpha k}{n}\OPT_k
\right\}
\right).
$$
Equivalently, the competitive ratio is
$$
\frac{\OPT_k}{\mathbb{E}[\ALG_{\alpha k}]}
\le
\tilde{O}\!\left(
\min\left\{
\beta,
\frac{n}{\alpha k}
\right\}
\right)
=
\tilde{O}\!\left(
\min\left\{
\frac{n}{\alpha k},
\max\left(
\sqrt{\frac{n}{\alpha}},
\frac{n^{2/3}}{\alpha k^{1/3}}
\right)
\right\}
\right),
$$
as claimed.

\subsection{Running Time Analysis}

We describe an implementation of \Cref{alg:final} with running time $\tilde{O}(n(n+m))$.

First, for every vertex $s \in V$, compute the shortest-path tree $T_s$ rooted at $s$ and spanning the connected component containing $s$. Since shortest paths are unique, the path $\SPV(s,t)$ is exactly the root-to-$t$ path in $T_s$. For every rooted tree $T_s$, compute one Euler tour and for each vertex $v$ store $\tin_s(v),\tout_s(v)$, which refer to the interval endpoints for the subtree. Thus $v\in \SPV(s,t)$ if and only if $t$ lies in the subtree of $v$ in $T_s$, equivalently $\tin_s(t)\in [\tin_s(v),\tout_s(v)]$. The $n$ shortest-path tree computations take $\tilde{O}(n(n+m))$ time by running a standard single-source shortest-path algorithm from every source.

The entire algorithm makes several calls to \Cref{alg:high-bc}. Inside each call of \Cref{alg:high-bc}, a current pair family $P_{cur}$ is maintained, and we will simply write $P$ instead of $P_{cur}$ in the following for the sake of simplicity. Define the \emph{target} set of a source $s$ with respect to $P$ by
$$
N_P(s):=\{t\in V\mid (s,t)\in P\}.
$$
We maintain $N_P(s)$ in the order induced by the Euler tour of $T_s$. More concretely, for each $s$, we store the values ${\tin_s(t):t\in N_P(s)}$ in a balanced binary search tree. Here this means a balanced binary search tree with height $O(\log n)$ that supports listing all targets in an interval and deleting all targets in an interval, each in $O((z+1)\log n)$ time, where $z$ is the number of targets in the interval. It also gives $\deg_P(s)=|N_P(s)|$ in $O(1)$ time. Such a data structure is given in Section~2.1.4 of \cite{henzinger1999randomized}. 

We next explain how to implement \Cref{alg:high-bc}. At the beginning of one call with pair family $P$, compute all values $\bc_P(v)$ and $\bc_P(e)$ as follows. For each source $s$, a postorder traversal of $T_s$ gives, for every vertex $v$, the number of targets both in $N_P(s)$ and in the subtree of $v$ in $T_s$. This number is added to $\bc_P(v)$ and, if $v\neq s$, to $\bc_P(e)$, where $e=(p,v)$ and $p$ is the parent of $v$ in $T_s$. Over all roots $s$, this takes $O(n^2)$ time. We store the current values $\bc_P(v)$ in a max-heap supporting extraction of a maximum key and key decreases in $O(\log n)$ time.

Now suppose that \Cref{alg:high-bc} selects a vertex $x=v_i$ in line~\ref{line:HBC6}. The pairs deleted from the current pair family $P$ are

$$
D:=\{(s,t)\in P\mid x\in \SPV(s,t)\}.
$$
Using the Euler intervals, for every source $s$ we find exactly these deleted targets by listing and deleting all targets of $N_P(s)$ whose Euler-tour values lie in the interval $[\tin_s(x),\tout_s(x)]$ of $T_s$. During this deletion, for each vertex $s$, we compute
$$
A_s:=\big|\{t\mid (s,t)\in D\}\big|,
\qquad
B_s:=\big|\{t\mid (t,s)\in D\}\big|.
$$
The total time for this process is $O((|D|+n)\log n)$. Here the $|D|\log n$ term accounts for the deleted targets, and the $n\log n$ term comes from making one interval query for each of the $n$ sources, even when the corresponding interval contains no target. 

We will use $A$ and $B$ to update the residual betweenness centrality values. Put weight
$$
w(y):=A_y+B_y
$$
on each vertex $y$ of the shortest-path tree $T_x$, and let $W(y)$ be the total weight in the subtree of $y$ in $T_x$. These values are computed by one postorder traversal of $T_x$. For every vertex $y\neq x$, decrease $\bc_P(y)$ by $W(y)$, and decrease $\bc_P(x)$ by $|D|$. For every parent edge $e=(p,y)$ in $T_x$, decrease $\bc_P(e)$ by $W(y)$, where $p$ is the parent of $y$ in $T_x$. These updates are correct because, for every deleted pair $(s,t)\in D$, the unique shortest path from $s$ to $t$ contains $x$, and hence is the union of the two shortest paths from $x$ to $s$ and from $x$ to $t$. Therefore a vertex $y\neq x$, or an edge with child endpoint $y$ in $T_x$, lies on the shortest $xs$-path, respectively $xt$-path, exactly when $s$, respectively $t$, lies in the subtree of $y$ in $T_x$.

In one call to \Cref{alg:high-bc}, each ordered pair of $P$ is deleted at most once, and at most $n$ vertices are selected. Hence the total time for all interval deletions, all postorder traversals, and all heap updates in this call is $\tilde{O}(n^2)$. 

By \Cref{lem:ror,lem:ir}, \Cref{alg:final} makes only $O(\log^2 n)$ calls to \Cref{alg:high-bc}. We now account for the remaining operations. Each line of \Cref{alg:final} takes at most linear time, except for calling \Cref{alg:IR}. In \Cref{alg:IR}, the while loop has $O(\log n)$ iterations by \Cref{lem:ir}. In each iteration, the call to \Cref{alg:ROR} is the only non-linear operation. The maximum residual degree is computed by scanning all vertices and using the maintained values $\deg_P(x)=|N_P(x)|$, which takes $O(n)$ time. In \Cref{alg:ROR}, the while loop has $O(\log n)$ iterations by \Cref{lem:ror}. Each iteration forms $T$ by scanning all vertices and forms $Q$ by scanning the currently maintained pair family, taking $O(n^2)$ time, and then makes one call to \Cref{alg:high-bc}. Finally, \Cref{alg:RWRBC} only samples the required query vertices, and hence takes linear time. Since one call to \Cref{alg:high-bc} takes $\tilde{O}(n^2)$ time as proved above, and there are only $O(\log^2 n)$ such calls, the total time after the shortest-path preprocessing is $\tilde{O}(n^2)$. Thus the total running time is
$$
\tilde{O}(n(n+m)),
$$
whose bottleneck is the shortest-path tree computation.

\section{Lower Bound}\label{sec:lb}

In this section we prove the lower bound part of \Cref{thm:main}, which we restate here:

\begin{theorem}
    For any positive integers $\alpha,k$ with $\alpha k \le n$, and for every randomized non-adaptive algorithm, there exists an $n$-vertex weighted graph with unique shortest paths between reachable pairs, together with a capacity assignment, such that when the algorithm is run on this instance, it has an expected competitive ratio of
$$
\frac{\OPT_k}{\mathbb{E}(\ALG_{\alpha k})}=
\tilde{\Omega}\!\left(
\min\left\{
\frac{n}{\alpha k},
\max\left(
\sqrt{\frac{n}{\alpha}},
\frac{n^{2/3}}{\alpha k^{1/3}}
\right)
\right\}
\right).
$$
\end{theorem}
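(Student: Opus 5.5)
We will prove the lower bound via Yao's principle. We construct a distribution over capacity assignments on a fixed graph $G$, and show that every deterministic non-adaptive choice of $\alpha k$ vertices reveals few edges in expectation, while $\OPT_k$ is large for every assignment in the support. Since the algorithm's output distribution depends only on $G$, it suffices to bound $\max_{v_1,\dots,v_{\alpha k}}\mathbb{E}_c[\ALG_{\alpha k}]$. We split the target bound into two regimes, as in the technical overview.

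In the \emph{large-parameter regime} $\alpha^3k^2\ge n$, the target is $\min\{n/(\alpha k),\sqrt{n/\alpha}\}$. We use the butterfly construction (Case 1): take $N$ disjoint butterflies, each with $s$ top vertices and $\log s$ levels, and attach shortcut structures and filler stars to their bottom layers. For each assignment we pick $k$ butterflies uniformly at random and, within each, a uniformly random top vertex as the hidden optimum. Capacities along that vertex's shortest-path tree decrease with depth, so its query reveals all tree edges and its private filler edges. Every other edge gets a random capacity, and shortcut edges get small capacities. We then tune $s$ and the filler size $f$ so that each optimal query reveals about $s+f$ nearly disjoint edges, giving $\OPT_k\approx k(s+f)$. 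A non-optimal query should reveal only $\tilde O(\cdot)$ edges, through the record-breaking argument with $O(\log n)$ records per root-leaf path combined with the shortcut bottleneck. A fixed query hits a hidden optimum with probability about $1/(Ns)$, so $\alpha k$ queries gain about $\alpha k\cdot k(s+f)/(Ns)$. Balancing $Ns+Nf\approx n$ should give the ratio $\min\{n/(\alpha k),\sqrt{n/\alpha}\}$; in particular, $s\approx\sqrt{n\alpha}$-type choices yield $\sqrt{n/\alpha}$.

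In the \emph{small-parameter regime} the target is $n^{2/3}/(\alpha k^{1/3})$, and we generalize the two-phase expansion (Case 2). Take $w\times w$ candidate layers $u_{p,q}$, connected to $x$, $x'$ and $y$ layers via the two-phase expansion, with upper and lower shortcut trees of arity $w$ and height two, and the filler layer beyond $y$. We use $k$ (or a suitable number of) disjoint copies, choosing $w$ as a function of $n,\alpha,k$ so that $k$ hidden candidates reveal about $n$ edges in total while each blind query reveals $\tilde O(w)$ edges. The ratio becomes $(\#\text{candidates})/(\alpha k)\approx n^{2/3}k^{-1/3}/\alpha$ after optimizing.

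The key structural lemma, which we must verify in each construction, is this: for a non-candidate (or non-hidden) vertex $v$, every shortest path from $v$ either passes through the top level of a shortcut tree, which contributes $\tilde O(w)$ distinct bottleneck edges since those edges have the smallest capacities, or follows a candidate-to-$y$ route whose bottleneck is an $x$-$x'$ edge. We also need to show that $v$'s paths into the filler region are bottlenecked before the private filler edges of the hidden optimum. The main obstacle will be the weight design: we must choose edge weights so that shortest paths are unique and actually route through the shortcuts as claimed, while the hidden candidate's paths to $y$ and the filler avoid them. We would first try layered weights, with shortcut edges of weight $\varepsilon$ relative to unit layer edges plus tiny generic perturbations for uniqueness, and check path structure by direct case analysis. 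Finally, the last step is to combine the two constructions, taking whichever gives the larger bound, truncated by the trivial $n/(\alpha k)$ term. To keep the vertex count at exactly $n$, we pad with isolated vertices.
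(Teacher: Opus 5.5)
There is a genuine gap. For $k>1$ neither construction reaches the stated dependence on $k$, because in both of them $\OPT_k=O(n)$, and this cap is enough to make your own accounting fall short.

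\textbf{Why $\OPT_k=O(n)$ is fatal.} Your filler is private to each butterfly or copy. In Case 1 you budget $N(s+f)\approx n$ vertices, and in Case 2 you require that ``$k$ hidden candidates reveal about $n$ edges in total''.

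\emph{Case 1.} A blind query at a top vertex of a butterfly with random capacities reveals $\Theta(s/\log s)$ edges, not polylogarithmically many. The record-breaking bound is $O(\log n)$ \emph{per root--leaf path}. But the sub-butterfly below a top vertex is a complete binary tree with $s$ leaves, and an edge at depth $d$ is a record with probability $1/d$. So the oblivious algorithm that queries random top vertices gets
$\mathbb{E}[\ALG_{\alpha k}]\gtrsim \min\{\alpha k,N\}\,s/\log s+\alpha k^2(s+f)/(Ns)$.
With $N\ge k$ and $N(s+f)\le n$, AM--GM gives a ratio of at most $\tilde O(\sqrt{n/(\alpha k)})$. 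That is a factor $\sqrt{k}$ below $\sqrt{n/\alpha}$, and far below $n/(\alpha k)$ when $\alpha k^2\ge n$.

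\emph{Case 2.} With $kw^2$ candidates and $\Theta(w)$ edges per blind query, the ratio is at most $n/(\alpha k w+\alpha n/w^2)=O((n/k)^{2/3}/\alpha)$. Your value $n^{2/3}k^{-1/3}/\alpha$ comes from optimizing only the hitting term $\#\text{candidates}/(\alpha k)$ and ignoring the $\alpha k w$ term.

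\emph{The $n/(\alpha k)$ branch.} This is not a trivial truncation. When it is the minimum, you still need an instance that attains it.

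\textbf{The missing idea: share the filler across parts.} You need $\OPT_k=\tilde\Omega(kn)$ on only $n$ vertices, which the paper gets by \emph{sharing} the filler among the $k$ parts.
\begin{itemize}
\item Each filler vertex $f_{t,p,q}$ is adjacent to $y_{i,p,q}$ for every part $i$. So the $k$ hidden queries reveal $k$ edge-disjoint families of $PQ\tau=\tilde\Theta(n)$ filler edges.
\item A shared lower shortcut through $y'''$, with capacity-$1$ edges, forces every cross-part shortest path to avoid the filler and bottleneck on one of $O(k)$ cheap edges.
\item Capacities are structured rather than random: all non-high edges lie below the filler capacity and all high edges above it. Hence a filler edge can be a bottleneck only on paths whose other edges are all high. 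This gives $O(\tau)$ expected filler edges per blind query, partial hits at level $\ell$ included.
\end{itemize}

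This yields $\mathbb{E}[\ALG_{\alpha k}]=\tilde O(\alpha k\tau)$ against $\OPT_k=\tilde\Omega(kn)$, i.e.\ ratio $n/(\alpha\tau)$. A single choice $\tau=\max\{k,\min\{\sqrt{n/\alpha},(kn)^{1/3}\}\}$ then covers all three regimes in one construction:
\begin{itemize}
\item $\tau\ge k$ is needed for $Q\ge 1$ and to bound what filler queries reveal.
\item The other two terms come from bounding the butterfly edges either globally by $\tilde O(kQP)$ or per query by $\tilde O(\tau+P)$. This is exactly where the $\Theta(s/\log s)$ edges per blind query from Case 1 get absorbed.
\end{itemize}
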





\subsection{Construction}

The construction is illustrated in \Cref{fig:full}.

\begin{figure}[h!]
    \centering
    \includegraphics[width=1\textwidth]{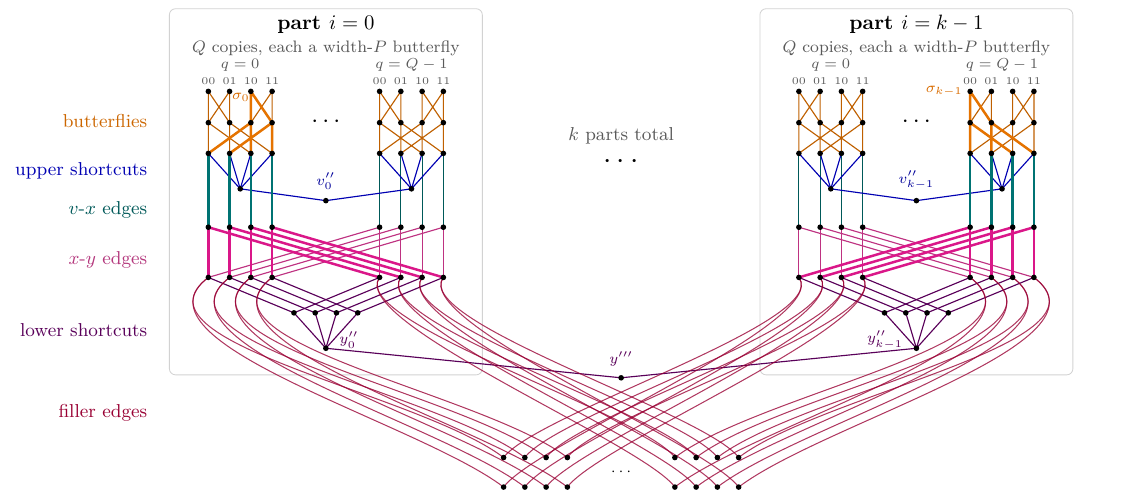}
    \caption{The lower-bound construction for the general multi-query setting.}
    \label{fig:full}
\end{figure}

The graph has $k$ parts. In each part, the index $q\in\Zx{Q}$ labels one of $Q$ \emph{copies}. For each copy index $q$, we add a butterfly graph and the associated $x$- and $y$-vertices defined below. The \emph{filler} vertices are connected to the $y$-layers of all parts and therefore do not belong to any single part. The graph also has two \emph{shortcut structures}: an upper shortcut structure, which is part-specific, and a lower shortcut structure, which is shared across parts. 

All edge weights below are first specified as \emph{base weights}. After all base weights have been assigned, we add distinct positive infinitesimal perturbations to the weights, chosen small enough that every strict base-weight comparison used in the proof remains unchanged. These perturbations make all shortest paths unique. The perturbations are fixed deterministically and are not part of the probability space used in the analysis.

\paragraph{Butterflies.}
For each part $i\in\Zx{k}$ and each copy index $q\in\Zx{Q}$, we add a width-$P$ butterfly graph with $h+1$ levels, where $P=2^h$. The parameters $Q,P,h$ are set later. The butterfly vertices are
$$
v_{i,\ell,p,q},
\qquad
 i\in\Zx{k},\quad \ell\in\{0,1,\ldots,h\},\quad p\in\Zx{P},\quad q\in\Zx{Q}.
$$
We refer to $\ell$ as the \emph{level}, to $p$ as the \emph{address}, and to $q$ as the \emph{copy index}. The vertices with $\ell=0$ are the \emph{top-level} butterfly vertices, and the vertices with $\ell=h$ are the \emph{bottom-level} butterfly vertices.

Recall that a butterfly graph is defined as follows. For every $v_{i,\ell,p,q}$ with $\ell<h$, add the two butterfly edges
$$
(v_{i,\ell,p,q},v_{i,\ell+1,p,q})
\qquad\text{and}\qquad
(v_{i,\ell,p,q},v_{i,\ell+1,p\oplus 2^\ell,q}),
$$
each with base weight $1$. Here $\oplus$ denotes bitwise XOR on the binary representations of the zero-based addresses.

For every top-level vertex $v_{i,0,p,q}$ and every bottom address $p'\in\Zx{P}$, there is a unique \emph{canonical downward path} in the butterfly from $v_{i,0,p,q}$ to $v_{i,h,p',q}$: at level $\ell$, the path takes the XOR edge exactly when the $\ell$-th bit of $p\oplus p'$ is $1$. The reverse of such a path is called a \emph{canonical upward path}.

For later use, let $D^V_{i,p,q}$ be the set of butterfly vertices that lie on a canonical downward path from $v_{i,0,p,q}$ to $v_{i,h,p',q}$ for some $p'\in\Zx{P}$. We call this the downward sub-butterfly rooted at $v_{i,0,p,q}$. Let $D^E_{i,p,q}$ be the set of butterfly edges on these paths. For every butterfly vertex $v_{i,\ell,p,q}$, let $I_{i,\ell,p,q}$ be the set of top-level vertices whose canonical downward paths can reach $v_{i,\ell,p,q}$, and let $O_{i,\ell,p,q}$ be the set of bottom-level vertices reachable from $v_{i,\ell,p,q}$ by canonical downward suffixes. Thus
$$
|I_{i,\ell,p,q}|=2^\ell
\qquad\text{and}\qquad
|O_{i,\ell,p,q}|=P/2^\ell.
$$

\paragraph{$x$-layer, $y$-layer, and filler vertices.}
For each part, there are two layers in addition to the butterfly vertices: the $x$-layer and the $y$-layer. Moreover, the cross-part filler vertices are connected to the $y$-layer vertices in every part.

As with the butterfly vertices, all these vertices have an address and a copy index. We say that two vertices have the same address if their $p$-coordinates are equal, and have the same copy index if their $q$-coordinates are equal.

For the $x$-layer, add vertices
$$
x_{i,p,q},
\qquad
 i\in\Zx{k},\quad p\in\Zx{P},\quad q\in\Zx{Q}.
$$
Here $p$ is the address and $q$ is the copy index. For every $x_{i,p,q}$, add the $v$-$x$ edge
$$
(v_{i,h,p,q}, x_{i,p,q})
$$
with base weight $1$. These edges form a matching between the $x$-layer and the bottom-level butterfly vertices.

For the $y$-layer, add vertices
$$
y_{i,p,q},
\qquad
 i\in\Zx{k},\quad p\in\Zx{P},\quad q\in\Zx{Q}.
$$
For every $y_{i,p,q}$, add the $x$-$y$ edge
$$
(x_{i,p,q'}, y_{i,p,q})
\qquad
q'\in\Zx{Q}
$$
with base weight $1$.

Finally, for the filler vertices, introduce a parameter $\tau$, which is set later, and add vertices
$$
f_{t,p,q},
\qquad
t\in\Zx{\tau},\quad p\in\Zx{P},\quad q\in\Zx{Q}.
$$
Here $t$ is the filler index. For every filler vertex $f_{t,p,q}$, add the filler edge
$$
(y_{i,p,q},f_{t,p,q})
\qquad
i\in\Zx{k}
$$
with base weight $1$.

\paragraph{Shortcut structures.}
There are two shortcut structures in the graph: an upper shortcut structure and a lower shortcut structure.

The upper shortcut structure is built on the bottom level of the butterflies and is independent for each part. The upper shortcut vertices are
$$
v'_{i,q}
\qquad
 i\in\Zx{k},\quad q\in\Zx{Q}
$$
and
$$
v''_i
\qquad
 i\in\Zx{k}.
$$
For every $v'_{i,q}$, add upper shortcut edges
$$
(v_{i,h,p,q}, v'_{i,q})
\qquad
p\in\Zx{P}
$$
with base weight $\varepsilon_2$. For every $v''_i$, add upper shortcut edges
$$
(v'_{i,q}, v''_i)
\qquad
 q\in\Zx{Q}
$$
with base weight $\varepsilon_1$. Here $\varepsilon_1$ and $\varepsilon_2$ are fixed constants satisfying
$$
0<\varepsilon_1\ll \varepsilon_2\ll 1.
$$

The lower shortcut structure is built on the $y$-layer and is shared across parts through the vertex $y'''$. The lower shortcut vertices are
$$
y'_{i,p}
\qquad
 i\in\Zx{k},\quad p\in\Zx{P},
$$
$$
y''_i
\qquad
i\in\Zx{k},
$$
and
$$
y'''.
$$

\noindent For every $y'_{i,p}$, add lower shortcut edges
$$
(y_{i,p,q'}, y'_{i,p})
\qquad
q'\in\Zx{Q}.
$$
For every $y''_i$, add lower shortcut edges
$$
(y'_{i,p},y''_i)
\qquad
p\in\Zx{P}
$$
and add the lower shortcut edge
$$
(y''_i,y''').
$$
All these lower shortcut edges have base weight $\varepsilon_1$.

\paragraph{Capacity distribution.}

For each part $i\in\Zx{k}$, choose independently and uniformly at random
$$
\sigma_i=(p_i^\star,q_i^\star)\in\Zx{P}\times\Zx{Q}.
$$
The hidden optimal query vertex in part $i$ will be the top-level butterfly vertex $v_{i,0,p_i^\star,q_i^\star}$. Given $\sigma=(\sigma_i)_{i\in\Zx{k}}$, we first designate the following edges as \emph{high}:
\begin{enumerate}[label=\arabic*.]
    \item the butterfly edges in $D^E_{i,p_i^\star,q_i^\star}$ for all $i\in\Zx{k}$;
    \item the $v$-$x$ edges $(v_{i,h,p',q_i^\star},x_{i,p',q_i^\star})$ for all $i\in\Zx{k}$ and $p'\in\Zx{P}$;
    \item the $x$-$y$ edges $(x_{i,p',q_i^\star},y_{i,p',q'})$ for all $i\in \Zx{k}$, $p'\in\Zx{P}$, and $q'\in\Zx{Q}$.
\end{enumerate}
Base capacities are then assigned according to the following table.

\begin{center}
\begin{tabular}{c|c}
\toprule
base capacity & edges \\
\midrule
$1$ & $(y''_i,y''')$ \\
$2$ & $(v'_{i,q},v''_i)$, $(y'_{i,p},y''_i)$ \\
$3$ & $(v_{i,h,p,q},v'_{i,q})$, $(y_{i,p,q'},y'_{i,p})$ \\
$4$ & all non-high $v$-$x$ edges \\
$5+\frac{h-\ell}{h}$ & all non-high butterfly edges between butterfly levels $\ell$ and $\ell+1$, for $\ell=0,\ldots,h-1$ \\ 
$6$ & all non-high $x$-$y$ edges \\
$7$ & all filler edges $(y_{i,p,q},f_{t,p,q})$ \\
$8$ & high $v$-$x$ edges \\
$9$ & high butterfly edges and high $x$-$y$ edges \\
\bottomrule
\end{tabular}
\end{center}

Within each base-capacity class, we add distinct infinitesimal perturbations to make all capacities distinct, small enough that all comparisons between different base capacities remain unchanged. These perturbations are fixed deterministically after $\sigma$ is chosen.


\paragraph{Size and padding.}
We first define the parameters. Let
$$
\tau:=\left\lceil\max\left\{k,\min\left\{\sqrt{\frac n\alpha},(kn)^{1/3}\right\}\right\}\right\rceil
$$
and
$$
Q:=\left\lceil\frac{\tau}{k}\right\rceil.
$$
Since $\tau\ge k$, we have $kQ=\Theta(\tau)$.

Choose $\Lambda=\Theta(\log^2 n)$ with a sufficiently large hidden constant and set
$$
N_0:=\left\lfloor\frac{n}{\Lambda}\right\rfloor.
$$

Let $P=2^h$ be a power of two satisfying
$$
\frac{N_0}{4Q\tau}\le P\le \frac{N_0}{Q\tau},
$$
provided that $N_0/(Q\tau)$ is at least a sufficiently large constant. If $N_0/(Q\tau)$ is smaller than this constant, then the target lower bound is only polylogarithmic: a direct check\footnote{In the regime $\tau=k$ this follows from $Q\tau=\Theta(k)$; in the regime $\tau=\sqrt{n/\alpha}$, the condition $N_0/(Q\tau)=O(1)$ implies $\alpha k=\log^{O(1)}n$ and the regime condition gives $n/(\alpha\tau)=\tau\le \alpha k$; and in the regime $\tau=(kn)^{1/3}$ the condition can occur only when $n=\log^{O(1)}n$. In that case, a star on $n$ vertices with distinct edge capacities already gives a constant worst-case ratio, which is enough for the claimed $\tilde{\Omega}$ bound.} from the definition of $\tau$ gives
$$
\frac{n}{\alpha\tau}\le \log^{O(1)} n.
$$
 Hence, for the rest of the proof we assume $N_0/(Q\tau)$ is sufficiently large and use the construction above.

We now check the number of vertices before padding. The filler vertices contribute $PQ\tau$ vertices. The butterflies contain $O(kQP\log P)$ vertices, and the remaining non-filler vertices contribute $O(kQP+kP+kQ)$ vertices. Since $kQ=O(\tau)$, the non-filler vertices together have size $O(P\tau\log P)$. Therefore the total number of vertices is
$$
O(PQ\tau+P\tau\log P)\le O(N_0\log n).
$$
Taking the constant in $\Lambda$ large enough makes this graph have at most $n$ vertices.

To reach exactly $n$ vertices, we add the required number of isolated vertices. Since no reachable ordered pair contains isolated vertices, they do not affect the competitive-ratio lower bound. We therefore analyze the original graph below.

\subsection{Analysis}


\paragraph{Upper bound on $\mathbb{E}[\ALG_{\alpha k}]$.}

Fix a deterministic query set $S$ with $|S|\le \alpha k$ before the random vector $\sigma$ is chosen. We will show that 
$$
\mathbb{E}_{\sigma}[|R(S)|]\le\tilde{O}(\alpha k\tau),
$$
where the expectation is only over $\sigma$.

Let $E_{\mathrm{small}}$ be the set of edges with base capacity $1$ or $2$. Let $E_{\mathrm{large}}$ be the set of $x$-$y$ edges and filler edges. Let $E_{\mathrm{mid}}$ be the set of remaining edges. (These names refer to the sizes of the sets, rather than to the capacities of the edges they contain.)
Then
$$
|E_{\mathrm{small}}|\le O(kQ+kP)=O(\alpha k\tau)
$$
and
$$
|E_{\mathrm{mid}}|\le O(kQP\log P)=\tilde{O}(kQP).
$$
Here we use $kQ=O(\tau)$ and $P\le O(\alpha\tau)$. To see the latter, note that $P\le n/(Q\tau)\le nk/\tau^2$, since $Q\ge \tau/k$. By the definition of $\tau$, at least one of $\sqrt{n/\alpha}\le \tau$ and $(nk)^{1/3}\le \tau$ holds. In the first case, $nk/\tau^2\le \alpha k\le \alpha\tau$; in the second case, $nk/\tau^2\le \tau\le \alpha\tau$. Hence $P\le O(\alpha\tau)$.

We will prove the following single-query bound.
\begin{lemma}\label{lem:lower-bound-single-query}
For every fixed query vertex $u$,
$$
\mathbb{E}_{\sigma}[|R(u)\cap E_{\mathrm{large}}|]=O(\tau)
\qquad\text{and}\qquad
\mathbb{E}_{\sigma}[|R(u)\cap E_{\mathrm{mid}}|]=\tilde{O}(\tau+P).
$$
\end{lemma}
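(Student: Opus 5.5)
Fix the query vertex $u$ and split by its type: top-level or internal butterfly vertex, bottom butterfly vertex, $x$-vertex, $y$-vertex, filler vertex, or shortcut vertex. For every reachable target $t$, first determine the shortest $u$--$t$ path under the base weights. Paths that can use shortcut edges of weight $\varepsilon_1,\varepsilon_2\ll 1$ will do so whenever they would otherwise cross many unit-weight edges. Then read off the bottleneck $b(u,t)$ from the capacity table. Throughout, the main simplification is the following: if the shortest path contains any edge of base capacity $1$ or $2$, the bottleneck lies in $E_{\mathrm{small}}$ and contributes nothing to either count. So we only need to count targets whose paths avoid $(y''_i,y''')$, $(v'_{i,q},v''_i)$ and $(y'_{i,p},y''_i)$. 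Such paths stay within one part, within one copy $q$ of the upper shortcut, and within one address $p$ of the lower shortcut.

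\textbf{Large edges.} Revealed filler edges are bottlenecks only on paths whose minimum is a filler edge (capacity $7$). Such a path must avoid all lower-capacity edges, so it uses only high butterfly, high $v$-$x$, high $x$-$y$ and filler edges. The case analysis should show the following. From a non-optimal vertex, any path ending in a filler vertex either passes a small-capacity shortcut edge or a non-high edge of capacity $\le 6$, except for the filler edges adjacent to a single $y_{i,p,q}$ or to a bounded number of $y$-vertices near $u$. That gives $O(\tau)$ filler edges; this matches the star of size $\tau$ at each $y$.

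For revealed $x$-$y$ edges, a path whose minimum edge is an $x$-$y$ edge can only reach within the $Q$ edges at one $x$-vertex or the $Q$ edges at one $y$-vertex without hitting shortcuts. That gives $O(Q)\le O(\tau)$ such edges. When $u$ happens to be the hidden optimum, or lies in $D^V_{i,p_i^\star,q_i^\star}$, the count can be as large as $P\tau$. I would handle this by probability: this event has probability at most $\Pr[q_i^\star=q]\cdot(\text{fraction of sub-butterflies containing }u)$, and summing $|I_{i,\ell,p,q}|/(PQ)$ against the $O(|O_{i,\ell,p,q}|\,Q\tau)$ edges reachable below gives $O(\tau)$ per level. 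Careful bookkeeping with the $\log$ levels keeps this $\tilde O(\tau)$; I expect the first bound to need exactly this averaging.

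\textbf{Mid edges.} Mid edges are butterfly, $v$-$x$, and capacity-$3$ shortcut edges. The capacity-$3$ edges revealed from $u$ are at most the $P$ edges at one $v'_{i,q}$ plus the $Q$ edges at one $y'_{i,p}$, since others require crossing a capacity-$\le 2$ edge. Butterfly edges revealed from a fixed $u$ are bottlenecks of canonical paths inside one butterfly. Non-high capacities decrease toward the bottom, i.e.\ $5+(h-\ell)/h$, so along a downward canonical path the bottleneck is the lowest non-high edge. Along an upward path, the bottleneck is the first edge. In a butterfly of width $P$, the number of distinct such "last/first" edges from one source is $O(P\log P)$ in the worst case but structured. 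I would bound it by $O(P+h)$ via counting bottlenecks per level. Non-high $v$-$x$ edges (capacity $4$) are revealed only through bottom vertices reachable without shortcuts, giving $O(P)$. High edges are revealed only when $u$ interacts with the hidden sub-butterfly, which again contributes $\tilde O(P)$ after averaging over $\sigma$.

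\textbf{Main obstacle.} The main obstacle is the exhaustive shortest-path case analysis. We must verify for each $u$-type that paths leaving the local structure indeed use the $\varepsilon$-shortcuts, and compare, for instance, the length through $v'_{i,q}$, which is $2\varepsilon_2$ plus one, against butterfly descents. I would first fix the shortest-path tree from a generic bottom butterfly vertex and from a generic $y$-vertex, since the other cases reduce to these plus a short prefix.
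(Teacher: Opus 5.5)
Your overall plan matches the paper's: route each target using the $\varepsilon$-shortcuts, discard every path through a base-capacity-$1$ or $2$ edge, and average over $\sigma$ wherever $u$ touches the hidden structure. Your computation $\frac{2^\ell}{PQ}\cdot 2^{h-\ell}\cdot Q\tau=\tau$ for $u=v_{i,\ell,p,q}$ is exactly the paper's. Since a fixed $u$ lies on a single level, it already gives $O(\tau)$ with no logarithmic loss. Note also that the hidden optimum reveals $PQ\tau$ filler edges, not $P\tau$.

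However, two parts of your case analysis are wrong as stated. The first is your claim that only the hidden optimum and vertices of $D^V_{i,p_i^\star,q_i^\star}$ reveal filler edges at more than a bounded number of $y$-vertices. Take $u=x_{i,p,q}$ with $q=q_i^\star$. The path to $f_{t,p,q'}$ is $x_{i,p,q_i^\star}-y_{i,p,q'}-f_{t,p,q'}$, made of a high $x$-$y$ edge (capacity $9$) and a filler edge (capacity $7$). So all $Q\tau$ filler edges at address $p$ of part $i$ are revealed, and $Q=\lceil\tau/k\rceil$ can be as large as $\tau$. This case also needs averaging: $\Pr[q_i^\star=q]=1/Q$ gives expected contribution $\tau$.

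The second and more serious hole is that you never use the fact that every filler vertex is adjacent to the $y$-layers of all $k$ parts, so one filler query sees every part at once. Several of your localization claims fail for $u=f_{t,p',q'}$:
\begin{itemize}
\item revealed butterfly edges lie in one butterfly;
\item capacity-$3$ bottlenecks lie at one $v'_{i,q}$ and one $y'_{i,p}$;
\item $x$-$y$ bottlenecks are the $Q$ edges at one vertex;
\item non-high $v$-$x$ bottlenecks number $O(P)$.
\end{itemize}
In fact this $u$ reveals $(v_{i,h,p',q},v'_{i,q})$ for every $i,q$, the non-high $v$-$x$ and $x$-$y$ edges at address $p'$ in every part, and butterfly edges in the hidden copy of every part.

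The shortcut, $v$-$x$ and $x$-$y$ counts are $O(kQ)$. That is acceptable only because $kQ=\Theta(\tau)$, a relation your argument never invokes.

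The butterfly edges are the real problem. Bounding each part's hidden butterfly by $O(P\log P)$ gives $O(kP\log P)$, which is not $\tilde O(\tau+P)$ in general. For example, with $\alpha=1$ and $k=n^{3/10}$ one has $\tau,P=\tilde\Theta(n^{13/30})$ but $kP=\tilde\Theta(n^{22/30})$.

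The missing idea is that each part contributes only $O(1)$ bottleneck edges per level. Take a target $w\notin D^V$ in copy $q_i^\star$ whose canonical path reaches $v_{i,h,p',q_i^\star}$. The high edges on that path are exactly its suffix shared with the hidden canonical path from $v_{i,0,p_i^\star,q_i^\star}$ to $v_{i,h,p',q_i^\star}$. Non-high capacities $5+\frac{h-\ell}{h}$ decrease with depth, so the bottleneck is the deepest non-high edge. That edge is one of the two edges entering, from above, a vertex of this single hidden path. Summing gives $O(k\log P)=\tilde O(\tau)$, using $k\le\tau$. The paper obtains the same count in expectation as $\frac{P}{2^{\ell'}}\cdot\frac{2^{\ell'}}{P}=O(1)$ per part and level.
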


Before proving the lemma, we show that it implies the claimed bound of $
\mathbb{E}_{\sigma}[|R(S)|]\le\tilde{O}(\alpha k\tau)
$. Since $E_{\mathrm{small}}$ is counted globally, and since $E_{\mathrm{mid}}$ can be bounded either globally or by summing the single-query bound, we have
\begin{align*}
\mathbb{E}_{\sigma}[|R(S)|]
&\le |E_{\mathrm{small}}|
   +\min\left\{|E_{\mathrm{mid}}|,\sum_{u\in S}\mathbb{E}_{\sigma}[|R(u)\cap E_{\mathrm{mid}}|]\right\}
   +\sum_{u\in S}\mathbb{E}_{\sigma}[|R(u)\cap E_{\mathrm{large}}|]\\
&\le \tilde{O}\bigl(\alpha k\tau+
      \min\{kQP,\alpha k(\tau+P)\}+\alpha k\tau\bigr).
\end{align*}
It remains to bound the middle term. Since $P\le n/(Q\tau)$, we have $kQP\le kn/\tau$. If $\sqrt{n/\alpha}\le \tau$, then $n\le \alpha\tau^2$, and hence
$$
kQP\le \frac{kn}{\tau}\le \alpha k\tau.
$$
If instead $(nk)^{1/3}\le \tau$, then
$$
P\le \frac{n}{Q\tau}\le \frac{nk}{\tau^2}\le \tau,
$$
and hence $\alpha k(\tau+P)=O(\alpha k\tau)$. Therefore
$$
\mathbb{E}_{\sigma}[|R(S)|]\le\tilde{O}(\alpha k\tau).
$$

\begin{proof}[Proof of \Cref{lem:lower-bound-single-query}]

We do a case analysis over the type of the query vertex and the type of the target vertex. The vertex types are butterfly vertices, $x$-layer vertices, $y$-layer vertices, filler vertices, upper shortcut vertices, lower shortcut vertices, and the shared vertex $y'''$. Since there are only constantly many ordered pairs of types, it suffices to fix the query vertex $u$, fix the target type, and bound the number of possible bottleneck edges obtained as the target ranges over all vertices of that type.

\noindent\textbf{Case 1: two endpoints inside one part.}
Fix a part $i$ and suppose first that both endpoints are non-filler vertices associated with this part, not including $y'''$.

\begin{itemize}
\item Suppose the two endpoints are both butterfly vertices, or one endpoint is a butterfly vertex and the other is an $x$-layer vertex. 
\begin{itemize}
    \item If the two endpoints lie in the same copy $q$, the entire shortest path uses only butterfly edges, $v$-$x$ edges, and upper shortcut edges incident to $v'_{i,q}$. It never enters the $y$-layer, because doing so would add at least one extra base-weight-$1$ edge, while the shortcut through $v'_{i,q}$ connects bottom-level addresses at cost $2\varepsilon_2\ll 1$. For a fixed query vertex, the possible bottleneck edges of this form are contained in one width-$P$ butterfly copy, the $P$ matching $v$-$x$ edges, and the $P$ incident bottom shortcut edges. This gives $O(P\log P)$ possible edges in $E_{\mathrm{mid}}$ and no contribution to $E_{\mathrm{large}}$.
    \item Otherwise, the shortest path goes through $v''_i$ and hence has bottleneck in $E_{\mathrm{small}}$.
\end{itemize}

\item Suppose the two endpoints both lie in the union of the $x$-layer and the $y$-layer.
\begin{itemize}
    \item If the two endpoints have the same address, then for a fixed query vertex there are only $O(Q)$ relevant targets with that address. Thus these paths contribute $O(Q)=O(\tau)$ possible bottlenecks.
    \item Otherwise, using the lower shortcut through $y''_i$ adds only $O(\varepsilon_1)$ beyond the necessary incident edges, while any path avoiding $y''_i$ must use an upper shortcut or additional base-weight-$1$ edges. Hence the bottleneck lies in $E_{\mathrm{small}}$.
\end{itemize}

\item Suppose one endpoint is a butterfly vertex $v_{i,\ell,p,q}$ and the other endpoint is a $y$-layer vertex $y_{i,p',q'}$.
\begin{itemize}
    \item If $v_{i,h,p',q}\notin O_{i,\ell,p,q}$, the shortest path changes addresses through $y''_i$, so the bottleneck lies in $E_{\mathrm{small}}$. 
    \item Otherwise, the shortest path is
    $$
    v_{i,\ell,p,q}\rightsquigarrow v_{i,h,p',q}-x_{i,p',q}-y_{i,p',q'}.
    $$
    \begin{itemize}
    \item If $q\ne q_i^\star$, then the $v$-$x$ edge is non-high and has base capacity $4$, while all butterfly and $x$-$y$ edges on this path have base capacity at least $5$. Hence the bottleneck is this $v$-$x$ edge.
    \item If $q=q_i^\star$ and $v_{i,\ell,p,q}\in D^V_{i,p_i^\star,q_i^\star}$, then the butterfly and $x$-$y$ edges have base capacity $9$, while the $v$-$x$ edge has base capacity $8$, so again the bottleneck is the $v$-$x$ edge.
    \item Otherwise, the bottleneck edge is a deepest non-high butterfly edge on the butterfly suffix of the path.
    \end{itemize}
    Hence in all cases, the bottleneck edge belongs to $E_{\mathrm{mid}}$. If the query vertex is the butterfly endpoint, all such bottlenecks are contained in one butterfly copy together with the relevant $v$-$x$ edges, so there are $O(P\log P)$ possibilities. If the query vertex is the $y$-layer endpoint, then the address $p'$ is fixed: the non-hidden copies contribute only the $Q$ possible $v$-$x$ edges with that address, and the hidden copy contributes at most $O(P\log P)$ butterfly edges. Thus this subcase contributes $\tilde{O}(\tau+P)$ edges in $E_{\mathrm{mid}}$.
\end{itemize}

\item Suppose one endpoint is an upper shortcut vertex, namely $v'_{i,q}$ or $v''_i$, and the other endpoint is not a lower shortcut vertex. Then the shortest path does not enter the lower shortcut structure. Its bottleneck is either in $E_{\mathrm{small}}$ or is incident to the upper shortcut endpoint. For a fixed query vertex, this gives $O(P+Q)$ possible edges in $E_{\mathrm{mid}}$.

\item Suppose one endpoint is a lower shortcut vertex, namely $y'_{i,p}$ or $y''_i$. Then the bottleneck is either in $E_{\mathrm{small}}$ or is incident to a lower shortcut vertex. For a fixed query vertex, this gives $O(P+Q)$ possible edges in $E_{\mathrm{mid}}$.

\end{itemize}

\noindent\textbf{Case 2: at least one filler endpoint.}
We next consider the cases in which at least one endpoint is a filler vertex. 

Suppose first that both endpoints are filler vertices, say $f_{t,p,q}$ and $f_{t',p',q'}$.
\begin{itemize}
\item If $p=p'$ and $q=q'$, then for a fixed query vertex there are only $O(\tau)$ such targets, so these paths contribute at most $O(\tau)$ possible bottleneck edges.

\item If $p=p'$ and $q\neq q'$, then the shortest path uses the lower shortcut vertex $y'_{i,p}$ for the part $i$ chosen by the infinitesimal weight perturbations:
$$
 f_{t,p,q}-y_{i,p,q}-y'_{i,p}-y_{i,p,q'}-f_{t',p,q'}.
$$
This path adds only $O(\varepsilon_1)$ shortcut weight beyond the two necessary filler edges, whereas any path that changes addresses or uses the upper shortcut structure is strictly longer in base weight. 
Its bottleneck edge is one of the base-capacity-$3$ edges incident to $y'_{i,p}$. As $i$ and the copy index vary, there are only $O(kQ)=O(\tau)$ such edges.

\item If $p\neq p'$, then the shortest path changes addresses through $y''_i$ for the part $i$ chosen by the infinitesimal weight perturbations. The path has a base-capacity-$2$ edge incident to $y''_i$, and hence its bottleneck lies in $E_{\mathrm{small}}$.
\end{itemize}

Now suppose that one endpoint is the filler vertex $f_{t,p',q'}$, and the other endpoint is associated with part $i$. The shortest path in this case is similar to the case where both endpoints are in part $i$ and one of them is in the $y$-layer.
\begin{itemize}
\item If the other endpoint is a butterfly vertex $v_{i,\ell,p,q}$, then:
\begin{itemize}
    \item If $v_{i,h,p',q}\notin O_{i,\ell,p,q}$, then their shortest path goes through $y''_i$, and the bottleneck edge is incident to $y''_i$.
    \item Otherwise, the shortest path is
    $$
    v_{i,\ell,p,q}\rightsquigarrow v_{i,h,p',q}-x_{i,p',q}-y_{i,p',q'}-f_{t,p',q'}.
    $$
    \begin{itemize}
        \item[(a)] If $q\ne q_i^\star$, then the $v$-$x$ edge is non-high and has base capacity $4$, while all butterfly, $x$-$y$, and filler edges on this path have base capacity at least $5$. Hence the bottleneck is this $v$-$x$ edge.
        \item[(b)] If $q=q_i^\star$ and $v_{i,\ell,p,q}\in D^V_{i,p_i^\star,q_i^\star}$, then the butterfly and $x$-$y$ edges have base capacity $9$, the $v$-$x$ edge has base capacity $8$, and the filler edge has base capacity $7$. Hence the bottleneck is the filler edge $(y_{i,p',q'}, f_{t,p',q'})$.
        \item[(c)] Otherwise, the bottleneck edge is a deepest non-high butterfly edge on the butterfly suffix. Suppose it is $(v_{i,\ell'-1,p''',q_i^\star},v_{i,\ell',p'',q_i^\star})$. Then $\ell'\in\{1,\ldots,h\}$, and $v_{i,0,p_i^\star,q_i^\star}\in I_{i,\ell',p'',q_i^\star}$ and $v_{i,h,p',q_i^\star}\in O_{i,\ell',p'',q_i^\star}$.
    \end{itemize}

For a fixed query vertex $v_{i,\ell,p,q}$, subcases (a) and (c) reveal only $O(P\log P)$ edges in $E_{\mathrm{mid}}$. For subcase (b), the event $v_{i,\ell,p,q}\in D^V_{i,p_i^\star,q_i^\star}$ has probability $2^\ell/(PQ)$, and the number of addresses $p'$ with $v_{i,h,p',q}\in O_{i,\ell,p,q}$ is $2^{h-\ell}$. Together with the $\tau Q$ choices of $(t,q')$, the expected contribution is
$$
\frac{2^\ell}{PQ}\cdot 2^{h-\ell}\cdot \tau Q=\tau.
$$  

For a fixed filler query vertex $f_{t,p',q'}$, subcase (a) contributes $O(kQ)$ edges and subcase (b) contributes $O(k)$ edges. For subcase (c), fix a part $i$ and a level $\ell'\in\{1,\ldots,h\}$. There are $P/2^{\ell'}$ vertices $v_{i,\ell',p'',q_i^\star}$ such that $v_{i,h,p',q_i^\star}\in O_{i,\ell',p'',q_i^\star}.$ For each such vertex, the event $v_{i,0,p_i^\star,q_i^\star}\in I_{i,\ell',p'',q_i^\star}$ occurs with probability $2^{\ell'}/P$. The number of revealed edges is at most the sum of the numbers of butterfly edges entering these vertices, which is at most twice the number of such vertices. Thus each fixed part and level contributes at most
$$
2\cdot\frac{P}{2^{\ell'}}\cdot\frac{2^{\ell'}}{P}=O(1).
$$
Summing over all parts and levels gives $O(k\log P)=\tilde O(\tau)$.

\end{itemize}
\item If the other endpoint is an $x$-layer vertex $x_{i,p,q}$, then different addresses $p\neq p'$ are handled by the lower shortcut through $y''_i$, so the bottleneck lies in $E_{\mathrm{small}}$. If $p=p'$, the shortest path is
$$
 x_{i,p,q}-y_{i,p,q'}-f_{t,p,q'}.
$$
The bottleneck is the $x$-$y$ edge when $q\neq q_i^\star$, and is the filler edge when $q=q_i^\star$. For a fixed filler query this gives at most $O(kQ)=O(\tau)$ possible $E_{\mathrm{large}}$ bottlenecks over all parts and copies. For a fixed $x$-layer query, the $q\neq q_i^\star$ case gives at most $Q$ possible $x$-$y$ bottlenecks, while the event $q=q_i^\star$ has probability $1/Q$ and then gives at most $\tau Q$ filler-edge bottlenecks. The expected contribution is therefore $O(Q+\tau)=O(\tau)$.

\item If the other endpoint is a $y$-layer vertex $y_{i,p,q}$, then different addresses again lead to a bottleneck in $E_{\mathrm{small}}$. If $p=p'$ and $q=q'$, the path is the single filler edge $(y_{i,p,q},f_{t,p,q})$. If $p=p'$ and $q\neq q'$, the shortest path goes through $y'_{i,p}$ and its bottleneck is a base-capacity-$3$ edge incident to $y'_{i,p}$. Thus a fixed filler query sees $O(k)\le O(\tau)$ possible filler-edge bottlenecks and $O(kQ)=O(\tau)$ possible $E_{\mathrm{mid}}$ bottlenecks from $y$-layer targets. Conversely, a fixed $y$-layer query sees at most $O(\tau)$ filler-edge bottlenecks and $O(Q)$ possible $E_{\mathrm{mid}}$ bottlenecks.

\item If the other endpoint is an upper shortcut vertex, then $v''_i$ contributes only $E_{\mathrm{small}}$ bottlenecks because every path from $v''_i$ uses an incident base-capacity-$2$ shortcut edge. For an endpoint $v'_{i,q}$, the shortest path to $f_{t,p',q'}$ goes through the bottom-level vertex $v_{i,h,p',q}$ and then through $x_{i,p',q}$ and $y_{i,p',q'}$; its bottleneck is the base-capacity-$3$ edge $(v_{i,h,p',q},v'_{i,q})$. Hence a fixed filler query obtains $O(kQ)=O(\tau)$ possible $E_{\mathrm{mid}}$ bottlenecks from such targets, while a fixed query at $v'_{i,q}$ obtains at most $O(P)$ such bottlenecks as the filler address varies.

\item If the other endpoint is a lower shortcut vertex $y'_{i,p}$ or $y''_i$, then addresses $p\neq p'$ and the vertex $y''_i$ give bottlenecks in $E_{\mathrm{small}}$. For $y'_{i,p'}$, the bottleneck is one of the base-capacity-$3$ edges incident to $y'_{i,p'}$, giving $O(k)$ possible bottlenecks for a fixed filler query and $O(Q)$ for a fixed lower-shortcut query.

\end{itemize}

\noindent\textbf{Case 3: endpoints in different parts, or one endpoint is $y'''$.}
Let the two endpoints be non-filler vertices associated with different parts. The shortest path uses the shared lower shortcut structure and passes through $y'''$: any path through a filler vertex would use at least two filler edges of base weight $1$, whereas using the lower shortcut adds only $O(\varepsilon_1)$ shortcut weight once the path reaches the appropriate $y$-layers. Therefore the bottleneck is the base-capacity-$1$ edge incident to $y'''$, which lies in $E_{\mathrm{small}}$. The same conclusion holds when one endpoint is $y'''$, including the case where the other endpoint is a filler vertex.

The case analysis above gives, for every fixed query vertex $u$,
$$
\mathbb{E}_{\sigma}[|R(u)\cap E_{\mathrm{large}}|]=O(\tau)
\qquad\text{and}\qquad
\mathbb{E}_{\sigma}[|R(u)\cap E_{\mathrm{mid}}|]=O(\tau+P\log P)=\tilde O(\tau+P).
$$
This proves \Cref{lem:lower-bound-single-query}.
\end{proof}

We now extend the fixed-query-set upper bound to randomized non-adaptive algorithms. Fix an arbitrary randomized non-adaptive algorithm $\mathcal A$ that makes at most $\alpha k$ queries. Because $\mathcal A$ is non-adaptive, its random query set is independent of $\sigma$. Equivalently, we may decompose $\mathcal A$ into deterministic non-adaptive algorithms, that is, fixed query sets $S_1,\ldots,S_m$ with $|S_j|\le \alpha k$, chosen with probabilities $p_1,\ldots,p_m$. Then, by the fixed-query-set upper bound,
\begin{align*}
\mathbb{E}_{\sigma,\mathcal A}[\ALG_{\alpha k}]
&\le \sum_j p_j\mathbb{E}_{\sigma}[|R(S_j)|]\\
&\le \sum_j p_j\,\tilde{O}(\alpha k\tau)\\
&= \tilde{O}(\alpha k\tau).
\end{align*}
Therefore, by averaging, there exists a fixed vector $\sigma^\star$ such that on the corresponding fixed capacity assignment,
$$
\mathbb{E}_{\mathcal A}[\ALG_{\alpha k}\mid \sigma=\sigma^\star]
\le \tilde{O}(\alpha k\tau).
$$

\paragraph{Lower bound on $\mathrm{OPT}_k$ and conclusion.}
Fix the vector $\sigma^\star=(\sigma_i^\star)_{i\in\Zx{k}}$ obtained above, and write $\sigma_i^\star=(p_i^\star,q_i^\star)$. We show that $\OPT_k$ is large by considering the following $k$ queries:
$$
v_{i,0,p_i^\star,q_i^\star},
\qquad i\in\Zx{k}.
$$
Fix a part $i$. For every $t\in\Zx{\tau}$, $p\in\Zx{P}$, and $q\in\Zx{Q}$, the shortest path from $v_{i,0,p_i^\star,q_i^\star}$ to $f_{t,p,q}$ is the path through part $i$:
$$
v_{i,0,p_i^\star,q_i^\star}
\rightsquigarrow
v_{i,h,p,q_i^\star}
-x_{i,p,q_i^\star}-y_{i,p,q}-f_{t,p,q}.
$$
The butterfly edges and the $x$-$y$ edge on this path are high edges with base capacity $9$, and the $v$-$x$ edge is a high edge with base capacity $8$. The final filler edge $(y_{i,p,q},f_{t,p,q})$ has base capacity $7$. Since the infinitesimal perturbations preserve all comparisons between different base capacities, the bottleneck edge on this path is
$$
b(v_{i,0,p_i^\star,q_i^\star},f_{t,p,q})=(y_{i,p,q},f_{t,p,q}).
$$
Thus, the query $v_{i,0,p_i^\star,q_i^\star}$ reveals all $PQ\tau$ filler edges incident to the $y$-layer of part $i$. As $i$ ranges over $\Zx{k}$, these sets of filler edges are edge-disjoint, because the endpoint in the $y$-layer carries the part index $i$, even though the filler vertices themselves are shared across parts. Hence
$$
\OPT_k
\ge kPQ\tau.
$$
By the choice of $P$ in the size-and-padding step,
$$
\frac{N_0}{4}\le PQ\tau\le N_0,
$$
and since $N_0=\lfloor n/\Lambda\rfloor$ with $\Lambda=\Theta(\log^2 n)$, we have $PQ\tau=\tilde{\Theta}(n)$. Therefore
$$
\OPT_k\ge \tilde{\Omega}(kn).
$$
Combining this with the upper bound for $\mathcal A$ on the capacity assignment corresponding to $\sigma^\star$ gives
$$
\frac{\OPT_k}{\mathbb{E}_{\mathcal A}[\ALG_{\alpha k}\mid \sigma=\sigma^\star]}
\ge
\tilde{\Omega}\left(\frac{n}{\alpha\tau}\right).
$$
Finally, by the definition of $\tau$,
$$
\frac{n}{\alpha\tau}
=
\tilde{\Theta}\left(\min\left\{\frac{n}{\alpha k},\max\left\{\sqrt{\frac n\alpha},\frac{n^{2/3}}{\alpha k^{1/3}}\right\}\right\}\right).
$$
This proves the lower bound part of \Cref{thm:main}.

\paragraph{AI disclosure.} We used ChatGPT 5.5 Thinking to assist with verifying the case-by-case analysis in \Cref{sec:lb}. The tool helped identify several flaws in earlier drafts of the argument. The substantive argument in the final write-up was developed by the authors.


\bibliographystyle{plainurl}
\bibliography{bottleneck}

\end{document}